\documentclass[aps,prx,onecolumn,10pt,superscriptaddress,longbibliography,nofootinbib]{revtex4-2}
\usepackage[utf8]{inputenc}               
\usepackage[english]{babel}               
\usepackage{etoolbox}
\makeatletter
\@namedef{ver@babel.sty}{}                 
\makeatother
\usepackage{xspace}

\usepackage{pifont}

\usepackage{siunitx}
\usepackage{mathtools}                    
\usepackage{amsmath,amssymb,amsthm}
\usepackage{braket}                       
\usepackage{physics}                      
\AtBeginDocument{\RenewCommandCopy\qty\SI}
\theoremstyle{plain}
\newtheorem{theorem}{Theorem}
\theoremstyle{definition}
\newtheorem{lemma}{Lemma}

\newenvironment{restatelemma}[1]{%
  \begin{lemma}%
}{%
  \end{lemma}%
}
\newenvironment{restatetheorem}[1]{%
  \begin{theorem}%
}{%
  \end{theorem}%
}

\usepackage{graphicx}
\usepackage{geometry}
\usepackage{booktabs}
\usepackage{array,multirow,adjustbox}
\usepackage{comment}
\usepackage{chemfig}
\usepackage{chemformula}
\newcommand{\CD}{\mathrm{CD}}      
 
\newcommand{\bigO}{\mathcal{O}}

\usepackage[
  colorlinks = False,
  linkcolor  = blue,
  citecolor  = blue,
  urlcolor   = blue,
  filecolor  = blue,
  pdfborder  = {0 0 0}
]{hyperref}
\begin{document}

\title{Oscillator-qubit generalized quantum signal processing for vibronic models: a case study of uracil cation}

\author{Jungsoo Hong}
\affiliation{SKKU Advanced Institute of Nano Technology (SAINT), Sungkyunkwan University, Suwon 16419, Republic of Korea}
\author{Seong Ho Kim}
\affiliation{Department of Chemistry, Ulsan National Institute of Science and Technology, Ulsan 44919, Republic of Korea}
\author{Seung Kyu Min}
\email{skmin@unist.ac.kr}
\affiliation{Department of Chemistry, Ulsan National Institute of Science and Technology, Ulsan 44919, Republic of Korea}
\author{Joonsuk Huh}
\email{joonsukhuh@yonsei.ac.kr}
\affiliation{Department of Chemistry, Yonsei University, Seoul 03722, Republic of Korea}
\affiliation{Department of Quantum Information, Yonsei University, Seoul 03722, Republic of Korea}

\date{\today}

\begin{abstract}
Hybrid oscillator-qubit processors have recently demonstrated high-fidelity control of both continuous- and discrete- variable information processing. However, most of the quantum algorithms remain limited to homogeneous quantum architectures. Here, we present a compiler for hybrid oscillator-qubit processors, implementing state preparation and time evolution. In hybrid oscillator-qubit processors, this compiler invokes generalized quantum signal processing (GQSP) to constructively synthesize arbitrary bosonic phase gates with moderate circuit depth $\bigO(\log(1/\varepsilon))$. The approximation cost is scaled by the Fourier bandwidth of the target bosonic phase, rather than by the degree of nonlinearity. Armed with GQSP, nonadiabatic molecular dynamics can be decomposed with arbitrary-phase potential propagators. Compared to fully discrete encodings, our approach avoids the overhead of truncating continuous variables, showing linear dependence on the number of vibration modes while trading success probability for circuit depth. We validate our method on the uracil cation, a canonical system whose accurate modeling requires anharmonic vibronic models, estimating the cost for state preparation and time evolution.
\end{abstract}

\maketitle

\onecolumngrid 

\section{Introduction} \label{sec:introduction}

Recent experimental progress in both discrete-variable (DV, qubit) and continuous-variable (CV, oscillator) information processing has established the foundation for hybrid CV–DV quantum architectures~\cite{liu_hybrid_2024}. Despite these advancements, applications of hybrid CV-DV processors are still in an early stage of exploration. Conventionally, qubits and oscillators have been employed in complementary roles: qubits provide access to non-Gaussian resources~\cite{nonGaussian2025}, while oscillators serve as quantum buses mediating entanglement between qubits in quantum platforms such as trapped ions~\cite{sorensen2000entanglement} and circuit QED~\cite{Eickbusch2022}. The hybrid CV-DV architecture exploits both continuous and discrete degrees of freedom and extends their capabilities beyond their auxiliary roles by harnessing native representations of variables and utilizing both types of non-classical operations, such as non-Gaussian operations and qubit entanglements with arbitrary phases.

Such hybrid control naturally matches problems where CV and DV degrees of freedom coexist. The simulation of nonadiabatic molecular dynamics, which requires explicit inclusion of electron–nuclear couplings, represents a regime where hybrid oscillator–qubit architectures are expected to outperform homogeneous qubit-based processors and classical methods~\cite{MacDonell2021,Kang2024}. Simulations of such hybrid CV–DV systems on qubit-only hardware incur additional overheads to represent high-dimensional bosonic operators with binary qubits. Although a single truncated oscillator uses only logarithmic number of qubits, the dominant overheads lie in arithmetic and phase-synthesis steps. From the classical perspectives, multiconfiguration time-dependent Hartree (MCTDH)~\cite{Worth2008} scales exponentially with the number of modes, and linear-vibronic-coupling (LVC) analog quantum simulators~\cite{MacDonell2021, Valahu2023, Whitlow2023, wang2023_CI_cQED} are effectively near-harmonic, failing to capture essential anharmonic effects. This motivates the development of a more general compiler for nonlinear bosonic dynamics on hybrid CV–DV hardware.

In this paper, we introduce oscillator–qubit generalized quantum signal processing (OQ-GQSP), a method for synthesizing arbitrary bosonic phase gates. This approach overcomes the challenges of implementing non-Gaussian operations through a constructive framework that enables robust and analytic determination of gate parameters. OQ-GQSP invokes generalized quantum signal processing (GQSP)~\cite{Motlagh2024} to oscillator–qubit architectures, allowing hybrid processors to implement arbitrary analytic potential propagators conditioned on electronic states—an essential capability for simulating nonadiabatic molecular dynamics. We apply this method to the uracil cation~\cite{MATSIKA2008356}, whose ultrafast relaxation through conical intersections (CIs) requires anharmonic models to accurately capture nonadiabatic dynamics.

Because OQ-GQSP inherits the quantum signal processing structure, it supports both coherent~\cite{vasconcelos2025BE} and incoherent accumulation schemes for composing multiple OQ-GQSP blocks. In this work, we employ incoherent accumulation, which reduces circuit depth at the cost of postselection. The success probability of postselection increases with circuit depth, enabling a quantitative trade-off analysis for optimizing resource budgets on a given hardware specification.

The hybrid oscillator-qubit quantum architecture organizes the solution as five steps:
(i) the application layer defines ultrafast relaxation dynamics as the target problem;
(ii) an algorithmic layer realizes this as time evolution under the anharmonic vibronic model;
(iii) a compiler layer synthesises the bosonic nonlinear phase gates needed by the anharmonic vibronic model;
(iv) the instruction set architecture (ISA) enumerates the native oscillator–qubit operations~\cite{Araz2025};
(v) finally, the hardware layer comprises trapped ions and circuit-QED that implement those primitives.
Each lower layer exists only to support the one above it, guaranteeing that every step of the stack is both necessary for chemical accuracy and sufficiently native for quantum hardware. Figure~\ref{fig:method} provides a visual summary of our workflow, including these layers. We encode the four electronic states of uracil cation in a qubit-encoding representation, map off-diagonal vibronic couplings to multi-controlled displacement (MCD) gates, and implement diagonal anharmonic potentials through OQ-GQSP. The complete circuit evolves via Trotterized time evolution, directly incorporating both linear couplings and nonlinear anharmonic potential effects at each timestep.

\begin{figure*}
    \includegraphics[width=0.95\linewidth]{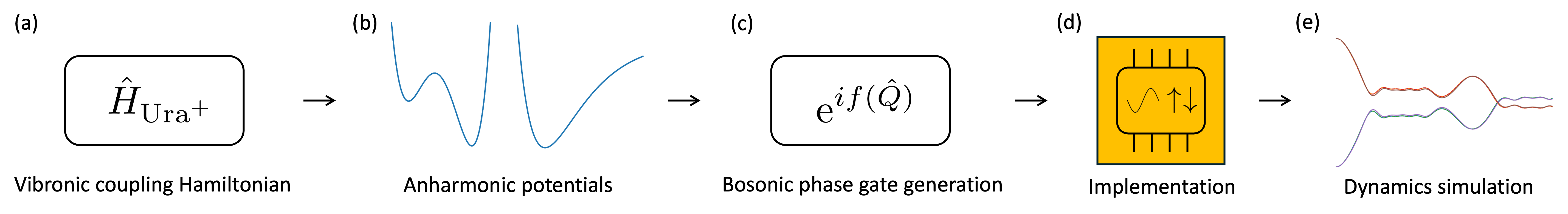}
     \caption{\label{fig:method} Quantum simulation workflow for nonadiabatic dynamics of the uracil cation. (a) Target system specified by the vibronic coupling Hamiltonian $\hat H_{\mathrm{Ura}^+}$. (b) Anharmonic potential energy surfaces $V(x)$ beyond the harmonic approximation. (c) Construction of nonlinear bosonic phase gates via oscillator–qubit generalized quantum signal processing (OQ–GQSP). (d) Implementation on a hybrid oscillator–qubit quantum processor. (e) Measurements yield nonadiabatic electronic population dynamics.}
\end{figure*}

The remainder of this paper is organized as follows. The theory section presents the theoretical framework for anharmonic vibronic coupling models and introduces the uracil cation Hamiltonian as our example system. The method section details compilation stages including the ISA of oscillator-qubit systems and our OQ-GQSP algorithm for implementing anharmonic potentials. The numerical experiment section provides numerical demonstrations of the potential energy surface reconstruction and comparative dynamics simulations. In the resource estimation section, we compare the computing resources and time complexity for simulating the vibronic coupling model. In the discussion and conclusions section, we discuss the challenges of the OQ-GQSP method and compare it with the other methods.  
\section{Theory}\label{sec:theory}

\subsection{Vibronic coupling Hamiltonian}
For a molecule with $N$ electronic states and $M$ mass-frequency scaled normal mode coordinates $\{\hat{Q}_r\}$ ($r\in\{0,...,M-1\}$) with $\hat{P}_r\equiv -i\frac{\partial}{\partial Q_r}$, the vibronic coupling Hamiltonian takes the form:
\begin{align}\label{eq:VC_general}
    \begin{aligned}
\hat{H} = \, & \mathbb{I}_{\mathrm{el}}\otimes (\hat{T}+\hat{V}_0) + \hat{W}\\
=\, &\mathbb{I}_{\mathrm{el}} \otimes \sum_{r=0}^{M-1} \frac{\omega_r}{2}(
\hat{P}_r^2+\hat{Q}_r^2) + \sum_{n,m=0}^{N-1}\ket{n}\bra{m} \otimes h_{nm}(\hat{Q}_0,\dots,\hat{Q}_{M-1}),
    \end{aligned}
\end{align}
where $\hat{T}$ and $\hat{V}_0$ denote the kinetic and potential energy operators, respectively, and $\hat{W}$ is to describe additional change in potentials with respect to $\hat{V}_0$. The momentum and position operators can be defined as $\hat{P}_r = (\hat{a}_r-\hat{a}_r^{\dagger})/\sqrt{2}$ and $\hat{Q}_r = (\hat{a}_r+\hat{a}_r^{\dagger})/\sqrt{2}$, where $\hat{a}_r$ and $\hat{a}^{\dagger}$ are the bosonic annihilation and creation operators, respectively with $\omega_r$ denoting the harmonic frequency of mode $r$. The coupling functions $h_{nm}(\{\hat{Q}_r\})$ depend on the position operators and determine both the molecular physics and the computational complexity of the dynamics simulation which can be expanded as $h_{nn}(\{\hat{Q}_r\}) = E_n+\sum_r \kappa_r^{(n)}\hat{Q}_r + \sum_{r,s} \gamma_{rs}^{(n)}\hat{Q}_r\hat{Q}_s+\cdots$ for diagonals and $h_{nm}(\{\hat{Q}_r\}) = \sum_r\lambda_r^{(nm)}\hat{Q}_r+\sum_{r,s}\mu_{rs}^{(nm)}\hat{Q}_r\hat{Q}_s+\cdots$ for off-diagonals ($n\neq m$). Here, ionization energies $E_n$, and coefficients ($\kappa$, $\gamma$, $\lambda$, and $\mu$) can be determined as parameters at the reference geometries. Conventionally, the ionization energies $E_n$ can be included in the reference Hamiltonian $\hat{H}_0$ as $\hat{H}_0 = \mathbb{I}_{\mathrm{el}}\otimes (\hat{T}+ \hat{V}_0)+\sum_n \ket{n}\bra{n}E_n$ while we consider $\hat{W}$ as an expansion from the first order.

\subsection{Hamiltonian of uracil cation}
In this work, we adopt the model Hamiltonian of uracil cation from Ref.~\cite{Assmann2015}. The uracil cation consists of four electronic states $\{\ket{D_0},\dots,\ket{D_3}\}$ coupled through multiple vibrational modes (see supporting information in Ref.~\cite{Assmann2015} for details). Based on the photoelectron (PE) spectra, the consideration of symmetry of electronic states and normal modes as well as the magnitude analysis of parameters, the resulting model Hamiltonian can be written as 
\begin{equation}
\hat H_{\text{Ura}^{+}}=
    \hat{H_0}
    +\sum_{n=0}^{N-1}\ket n\bra n  \otimes  \sum_{r=0}^{M-1}f_{r}^{(n)}(\hat Q_r)
    +\sum_{\substack{n,m=0 \\ n \ne m}}^{N-1} \ket n\bra m  \otimes  \sum_{r=0}^{M-1}\lambda^{(nm)}_{r}\hat{Q}_r,
\label{eq:UraHam}
\end{equation} 
where $f_r^{(n)}(\hat{Q}_r)$ is a single-mode function neglecting any mode-mode couplings for diagonal elements. Once we adopt $f_r^{(n)}(\hat{Q}_r) = \kappa_r^{(n)}\hat{Q}_r+\gamma^{(n)}_r\hat{Q}_r^2$, it is called quadratic vibronic coupling (QVC) model. However, the limitations of the QVC approximation are numerically demonstrated in Appendix B. To accurately reproduce the photoelectron (PE) spectra, higher-order anharmonic terms such as $k_r^{(n)}\hat{Q}_r^4$ and Morse potentials $V_r^{(n)}(\hat{Q}_r)$ are introduced, replacing the harmonic potential $\tfrac{\omega_r}{2}\hat{Q}_r^2$ in $\hat{H}_0$. Figure~\ref{fig:Hamiltonian} shows the form of model Hamiltonian without $\hat{H}_0$ coupled to important normal mode coordinates. For uracil cation, Morse functions for bond stretching modes ($\nu_{24}$, $\nu_{25}$, $\nu_{26}$) and quartic functions for out-of-plane deformations ($\nu_{10}$, $\nu_{12}$) are introduced. We note that the off-diagonal elements between $\ket{D_0}$ and $\ket{D_3}$ or between $\ket{D_2}$ and $\ket{D_3}$ vanish. The specific parameter values are tabulated in Ref.~\cite{Assmann2015,Vindel2022} and also given in Appendix A.

\begin{figure}[t]
\includegraphics[width=0.95\linewidth]{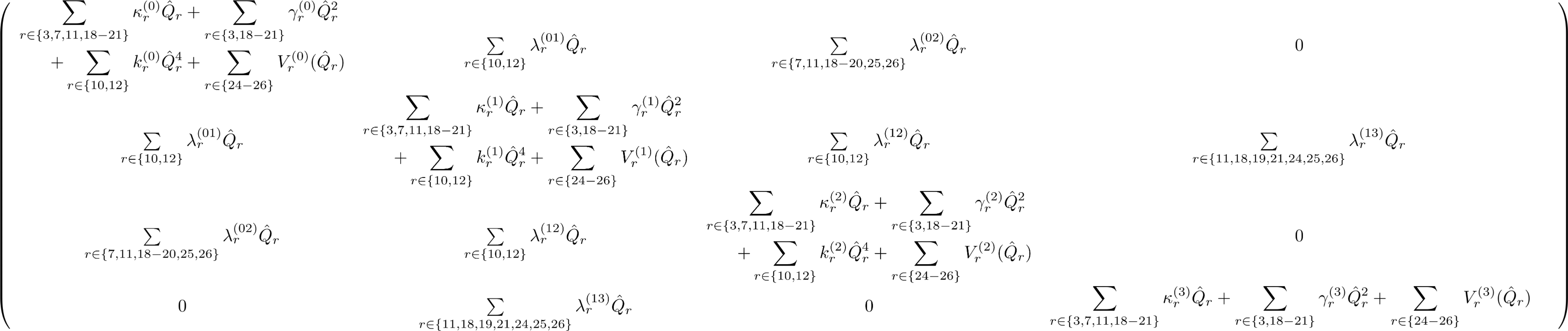}
\caption{\label{fig:Hamiltonian} Four-state vibronic coupling model Hamiltonian for uracil cation. $V_r^{(\alpha)}(\hat{Q}_r) = d_r^{(\alpha)}[e^{(a_r^{(\alpha)})(\hat{Q}_r-q_{r,0}^{(\alpha)})}-1]^2+e_{r}^{(\alpha)}$ is the Morse potential.}
\end{figure}
\section{Method}\label{sec:method}

We now present our method for simulating the uracil cation Hamiltonian on oscillator-qubit quantum processors. Our approach exploits the natural correspondence between molecular degrees of freedom and quantum hardware where the electronic states map to qubits while vibrational modes map to quantum oscillators. The key challenge lies in implementing the anharmonic diagonal potentials that go beyond the native instruction set of current hybrid oscillator-qubit quantum platforms.

Modern trapped-ion and circuit QED platforms provide an ISA that includes conditional displacements, conditional phase space rotations, and multi-qubit gates~\cite{liu_hybrid_2024,Araz2025}. Our compilation strategy systematically translates the molecular Hamiltonian into sequences of these hardware-native operations. The compilation proceeds in three stages: (1) encoding electronic states using an inverted unary representation that facilitates multi-level control operations, (2) implementing off-diagonal linear couplings through MCD gates, and (3) realizing diagonal anharmonic potentials via OQ-GQSP that extends the capabilities of the basic instruction set. 

\subsection{Electronic state encoding}

For the $N$-level electronic system, we employ inverted unary encoding using $N$ qubits, where each computational basis state has exactly one qubit in the $\ket{0}$ state. For the uracil cation's four electronic states, the mapping is:
\begin{align}
\ket{D_0} \mapsto \ket{1110}, \quad \ket{D_1} \mapsto \ket{1101}, \quad \ket{D_2} \mapsto \ket{1011}, \quad
\ket{D_3} \mapsto \ket{0111}.
\end{align}
This encoding offers two critical advantages over binary representations. First, it enables direct implementation of electronic transitions $\ket{D_n}\bra{D_m}$ using two-body operations on qubits $n$ and $m$, minimizing the additional complex multi-qubit gates required in binary encoding. Second, the redundancy of unary basis states reduces unwanted interference when applying OQ-GQSP to different electronic subspaces—a crucial requirement for implementing block-diagonal anharmonic potentials.

\subsection{Compilation of off-diagonal couplings}

The off-diagonal terms in the vibronic Hamiltonian encode the coupling between electronic states induced by the nuclear motion, driving nonadiabatic dynamics through CIs. These LVCs $\ket{D_n}\bra{D_m} \otimes \hat{Q}_r$ translate naturally to MCD operations in oscillator-qubit architectures~\cite{ha2025}.

The ISA of oscillator-qubit platforms provides the building blocks for constructing MCD gates~\cite{liu_hybrid_2024}. The conditional displacement (CD) gate $\CD_{q,r}(\theta) := \ket{0}_q\!\bra{0}\otimes \mathrm{e}^{+i\theta\hat{Q}_r}+\ket{1}_q\!\bra{1}\otimes \mathrm{e}^{-i\theta\hat{Q}_r}$ implements spin-dependent position displacements and serves as the primary gate throughout the compilation stages. In trapped ions, this is realized through red and blue sideband transitions that produce state-dependent forces~\cite{katz2023}. Circuit QED achieves the same operation through echoed conditional displacement (ECD) sequences with dispersive coupling between transmon qubits and microwave cavities~\cite{Eickbusch2022}. Additionally, two-qubit gates and single qubit Clifford gates complete our native instruction set for constructing MCD operations. For example, the vibronic interaction between $\ket{D}_0$ and $\ket{D_1}$ can be decomposed as:
\begin{align}
\begin{aligned}
\mathrm{e}^{\mathrm{i}\hat{H}_{\text{off}}^{(01)}\Delta t} \ket{\psi} = & \, \mathrm{e}^{i\theta(\ket{D_0}\bra{D_1}+\ket{D_1}\bra{D_0})\hat{Q}}\ket{\psi} \\ 
=& \, \mathrm{e}^{\mathrm{i} \theta(\ket{1110}\bra{1101}+\ket{1101}\bra{1110})\hat{Q}}\ket{\psi}\\
=& \, \mathrm{e}^{\mathrm{i} \theta(\hat{\sigma}_{x}^{(1)}\hat{\sigma}_x^{(0)} + \hat{\sigma}_y^{(1)}\hat{\sigma}_y^{(0)})\hat{Q}}\ket{\psi}\\
=& \, \mathrm{e}^{\mathrm{i} \theta\hat{\sigma}_{x}^{(1)}\hat{\sigma}_x^{(0)}\hat{Q}} \mathrm{e}^{i\theta\hat{\sigma}_y^{(1)}\hat{\sigma}_y^{(0)}\hat{Q}}\ket{\psi},
\end{aligned}
\end{align}
where $\ket{\psi}$ is the state in the inverted unary encoded space and $\hat{\sigma}^{(i)}_{k},\quad( k\in\{x,y,z\}\text{ and } i \in \{0,1,2,3\} )$ is a Pauli gate on the $i$ th qubit. The inverted unary encoding simplifies these implementations: the MCD gate for transition $n \leftrightarrow m$ requires only controlled operations on qubits $n$ and $m$, resulting in constant gate depth independent of the number of electronic states.

This set of operations is capable of directly implementing nonadiabatic dynamics of multilevel linear couplings and harmonic potentials, while this instruction set still cannot implement anharmonic terms that are essential for accurate molecular dynamics.

\subsection{Compilation of anharmonic diagonal potential}

The primary challenge in simulating the nonadiabatic dynamics of the uracil cation on an oscillator–qubit processor lies in implementing physics-inspired potentials, such as the Morse potential. While polynomial potentials can, in principle, be constructed via conventional methods using Baker–Campbell–Hausdorff (BCH) expansions, the overhead and inefficiency of synthesizing each term motivate us to explore alternative approaches. Moreover, the diversity of physically motivated functional forms necessitates a fundamentally different compilation strategy—one that can systematically and uniformly represent arbitrary potentials. To this end, we aim to construct a nonlinear bosonic phase gate that approximates arbitrary functions within a reliable error bound and remains robust under arbitrary quantum states, thereby enabling its direct use as a gate in quantum simulations.

The key insight is that a sequence of noncommuting gates can effectively accumulate nonlinear effects~\cite{Low2016}. Rather than explicitly reconstructing interaction terms order by order and canceling unwanted contributions via BCH expansions, Park et al.~\cite{Park2024} employed a direct Fourier series approximation with iterative noncommuting Rabi gates. Using this method, for example, a quartic potential term such as $\exp(\mathrm{i}0.2 \hat{Q}^4)$ can be approximated with ten Rabi gates, achieving a fidelity of $0.9932$~\cite{Park2024}.

A critical caveat in applying such gates to quantum dynamics is the vacuum-state dependence of the synthesized nonlinear bosonic phase gates. This dependence can introduce systematic errors when acting on general time-evolved states beyond the initial vacuum state. Notably, the iterative structure of Rabi gate sequences~\cite{Park2024} for generating Fourier series functions is formally analogous to quantum signal processing (QSP) applied to the oscillator–qubit hybrid system. QSP can represent a continuous function of the matrix with non-commuting sequences where angle parameters can be efficiently calculated~\cite{laneve2025generalizedquantumsignalprocessing,ni2025inversenonlinearfastfourier}. As typically used with QSP, Jacobi-Anger expansions with Chebyshev polynomial may offer improved robustness and convergence then Fourier series approximations. However, their implementation requires direct block encodings of bosonic operators such as $\begin{bmatrix} \hat{Q} & * \\ * & * \end{bmatrix}$—which, to the best of our knowledge, remains infeasible on hybrid oscillator–qubit architectures. In particular, Ref.~\cite{liu2024toward} demonstrates a QSP implementation on hybrid architectures by block-encoding bosonic phase operators using the $\CD$ gate, which acts as a Z-basis signal operator.

Within the conventional QSP framework, one aims to represent $\mathrm{e}^{-\mathrm{i}x\tau}$ approximately by a polynomial in the input variable $x$, later promoted to the operator or Hamiltonian, with $\tau$ denoting a real parameter corresponding to the effective simulation time~\cite{martyn2023efficient}. However, as emphasized in Ref.~\cite{martyn2023efficient}, this approach encounters fundamental limitations. The standard QSP constructs real polynomials with fixed parity in default. Extending the class of functions that can be approximated beyond even or odd requires additional circuit depth of linear combination of unitaries.

GQSP~\cite{Motlagh2024} lifts these limitations by generalizing the signal operator from $U(1)$ rotations to $\mathrm{SU}(2)$ rotations, and by introducing an asymmetric signal operator of the form $\hat{A}_{\hat{U}} = \begin{bmatrix} \hat{U} & 0 \\ 0 & I \end{bmatrix}$. This generalization eliminates the parity constraints and permits the construction of polynomials with arbitrary complex coefficients. The theorem of GQSP is often stated as follows for the case of including positive and negative powers of a Laurent polynomial~\cite{Motlagh2024,yamamoto2024robust,laneve2025generalizedquantumsignalprocessing}:
\begin{theorem}[Generalized Quantum Signal Processing (GQSP)~\cite{Motlagh2024}]\label{thm:GQSP}
Let $\boldsymbol\phi=(\phi_{-d},\dots,\phi_{d})$ and $\boldsymbol\theta=(\theta_{-d},\dots,\theta_{d})$. For any $d \in \mathbb{N}$, there exist parameters $\boldsymbol{\theta}, \boldsymbol{\phi} \in \mathbb{R}^{2d+1}$ and $\lambda \in \mathbb{R}$ such that:
\begin{align}
\mathrm{e}^{i\lambda\hat{\sigma}_z}\mathrm{e}^{i\phi_{-d}\hat{\sigma}_x} \mathrm{e}^{i\theta_{-d}\hat{\sigma}_z}\left( \prod_{r=-d+1}^{0} \hat{B} \mathrm{e}^{i\phi_{r}\hat{\sigma}_x} \mathrm{e}^{i\theta_{r}\hat{\sigma}_z}\right)\left( \prod_{r=1}^{d} \hat{A} \mathrm{e}^{i\phi_{r}\hat{\sigma}_x} \mathrm{e}^{i\theta_{r}\hat{\sigma}_z} \right) = 
\begin{bmatrix}
  F(x) &-(G(x))^{*} \\
  G(x) & (F(x))^{*} 
\end{bmatrix},\,
\end{align}
for all $x \in \mathbb{T}$, where $\mathbb{T} = \{x \in \mathbb{C} : |x| = 1\}$ and
\begin{align}
\begin{aligned}
 \hat{A} &= (\lvert 0 \rangle_q \langle 0 \rvert \otimes x ) + (\lvert 1 \rangle_q \langle 1 \rvert \otimes \mathbb{I})\\ &= 
        \begin{bmatrix}
        x & 0 \\
        0 & 1\\
        \end{bmatrix}, \\
 \hat{B} &= (\lvert 0 \rangle_q \langle 0 \rvert \otimes \mathbb{I} ) + (\lvert 1 \rangle_q \langle 1 \rvert \otimes x^{-1} )\\ &= 
        \begin{bmatrix}
        1 & 0 \\
        0 & x^{-1}\\
        \end{bmatrix},\\ 
\end{aligned}
\end{align}
if and only if:
\begin{enumerate}
    \item $F, G \in \mathbb{C}[x]$ with $|\deg(F)|, |\deg(G)| \leq d$.
    \item $|F(x)|^2 + |G(x)|^2 = 1$.
\end{enumerate}
\end{theorem}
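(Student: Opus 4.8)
The plan is to prove both directions of the biconditional separately: first that the displayed product \emph{necessarily} yields a matrix of the stated $\mathrm{SU}(2)$ form with Laurent-polynomial entries obeying conditions (1)--(2), and then conversely that every such pair $(F,G)$ is realized by some choice of $\boldsymbol\theta,\boldsymbol\phi,\lambda$. The necessity direction is essentially bookkeeping, while the sufficiency direction is the substantive part and will proceed by induction on $d$ via a degree-reduction (``peeling'') argument.

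For necessity I would first observe that on $\mathbb{T}$ every factor is unitary: the rotations $\mathrm{e}^{i\phi_r\hat\sigma_x}\mathrm{e}^{i\theta_r\hat\sigma_z}$ and $\mathrm{e}^{i\lambda\hat\sigma_z}$ lie in $\mathrm{SU}(2)$ since $\operatorname{tr}\hat\sigma_x=\operatorname{tr}\hat\sigma_z=0$ gives unit determinant, while $\hat A=\operatorname{diag}(x,1)$ and $\hat B=\operatorname{diag}(1,x^{-1})$ are unitary for $|x|=1$ with $\det\hat A=x$ and $\det\hat B=x^{-1}$. Because the product contains exactly $d$ copies of $\hat A$ and $d$ of $\hat B$, these determinants cancel, so for each fixed $x$ the total matrix lies in $\mathrm{SU}(2)$. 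Every element of $\mathrm{SU}(2)$ has the form $\begin{bmatrix}F&-G^{*}\\G&F^{*}\end{bmatrix}$ with $|F|^2+|G|^2=1$, which is precisely condition (2). Condition (1) then follows by induction on the number of signal operators: right-multiplication by $\hat A$ raises the top (positive) degree by one, by $\hat B$ lowers the bottom (negative) degree by one, and the constant rotations leave the degree span unchanged, so after $d$ of each the entries are Laurent polynomials with $|\deg|\le d$.

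For sufficiency I would argue by induction on $d$. The base case $d=0$ reduces to expressing an arbitrary constant element of $\mathrm{SU}(2)$ as $\mathrm{e}^{i\lambda\hat\sigma_z}\mathrm{e}^{i\phi_{0}\hat\sigma_x}\mathrm{e}^{i\theta_{0}\hat\sigma_z}$, i.e.\ the standard ZXZ Euler-angle decomposition, which fixes $\lambda,\phi_0,\theta_0$. For the inductive step, given a target $M=\begin{bmatrix}F&-G^{*}\\G&F^{*}\end{bmatrix}$ with $|\deg|\le d$, I would peel off the rightmost signal block by right-multiplying $M$ by $\mathrm{e}^{-i\theta_{d}\hat\sigma_z}\mathrm{e}^{-i\phi_{d}\hat\sigma_x}\hat A^{-1}$, choosing $\theta_d,\phi_d$ so that the resulting first column loses its extremal positive-degree coefficient and the top degree drops by one; after exhausting all $d$ copies of $\hat A$ in this way, one repeats the procedure with $\hat B^{-1}$ blocks to strip the negative-degree part, eventually reaching a degree-$0$ matrix handled by the base case. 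The key algebraic input is that the normalization $|F|^2+|G|^2=1$ on $\mathbb{T}$, read as an identity of Laurent polynomials, forces a proportionality relation among the extremal coefficients of $F$ and $G$; this is exactly what guarantees that a single real $\mathrm{SU}(2)$ rotation annihilates that coefficient in the correct combination, so that factoring out $\hat A^{-1}$ produces a genuine lower-degree Laurent polynomial rather than a term carrying an uncancelled $x^{-1}$.

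The main obstacle I anticipate is establishing this degree-reduction lemma cleanly: one must show that at every step suitable real angles $\theta_r,\phi_r$ exist, that the reduced pair still satisfies $|\tilde F|^2+|\tilde G|^2=1$ (automatic from unitarity once the angles are real, but to be checked for consistency with the required degrees), and that the grouped positive- and negative-degree peeling dictated by the product order can be carried out without the two sides interfering. Tracking simultaneously the top and bottom degrees of $F$ and $G$, and verifying that each $\hat A^{-1}$ (resp.\ $\hat B^{-1}$) strictly reduces the relevant span while leaving the other intact, is where the argument demands the most care; the existence of the cancelling rotation, by contrast, follows directly from the extremal-coefficient relation implied by condition (2).
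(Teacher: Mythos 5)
You should first be aware that the paper itself does not prove this statement: Theorem~\ref{thm:GQSP} is imported verbatim from Ref.~\cite{Motlagh2024}, and the paper's own proofs begin only at Lemma~\ref{lem:GQSP_qubit_oscillator_main}, which uses this theorem as a black box. So your attempt can only be measured against the argument in the cited literature --- which your outline does in fact mirror: necessity by pointwise unitarity plus determinant cancellation ($\det\hat A\,\det\hat B = x\cdot x^{-1}=1$, hence an $\mathrm{SU}(2)$ element for each $x\in\mathbb{T}$) plus degree bookkeeping; sufficiency by induction, peeling one signal block at a time with a rotation whose existence is guaranteed by the extremal-coefficient identity extracted from $|F|^2+|G|^2=1$. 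The key algebraic input you identify is exactly right: reading the normalization as a Laurent identity, the coefficient of $x^{2d}$ in $FF^*+GG^*$ gives $f_d\bar f_{-d}+g_d\bar g_{-d}=0$, i.e.\ $(f_d,g_d)\propto(-\bar g_{-d},\bar f_{-d})$, and this is what makes a single real rotation annihilate the right coefficients.

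Two points in your inductive step need repair before this closes into a proof. First, your induction hypothesis cannot literally be ``a matrix of the form $\begin{bmatrix}F&-G^*\\ G&F^*\end{bmatrix}$ of degree $d-1$'': after stripping one $\hat A$-block the remaining product contains $d$ copies of $\hat B$ but only $d-1$ of $\hat A$, so its determinant is $x^{-1}$, not $1$, and it is \emph{not} of that symmetric form. The standard fix is to induct on the pair $(d_1,d_2)$ of $\hat B$/$\hat A$ counts and state the hypothesis in terms of the first column only (every pair of Laurent polynomials with exponents in $[-d_1,d_2]$ and $|F|^2+|G|^2=1$ on $\mathbb{T}$ arises as a first column of the corresponding product); the second column then comes for free, because a pointwise-unitary $2\times 2$ matrix is determined by its first column together with its determinant, and the determinant of the product form is forced to be $x^{d_2-d_1}$ by counting signal operators. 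Second, the mechanism of the $\hat A$-peel is not the one you describe: since $\hat A^{-1}=\mathrm{diag}(x^{-1},1)$ shifts the first column \emph{down} by one degree, the rotation must annihilate the extremal \emph{negative}-degree (bottom) coefficient of the first column --- and simultaneously the extremal positive-degree coefficient of the second column; both cancellations are achievable with a single rotation precisely because of the orthogonality relation above --- after which the drop of the top degree is automatic from the shift. One further check you should not wave away: each interleaved rotation is only the two-parameter family $\mathrm{e}^{i\phi\hat\sigma_x}\mathrm{e}^{i\theta\hat\sigma_z}$, not all of $\mathrm{SU}(2)$, so you must verify (it is true, using the residual phase freedom, and handling the degenerate case where the extremal coefficient pair vanishes) that the required cancelling rotation lies in this family. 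With these corrections your plan reproduces the proof of the cited reference.
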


In Ref.~\cite{Motlagh2024}, generating nonlinear phase functions via GQSP proceeds through a two-step approximation strategy. First, the target potential $V(x)$ is approximated by a truncated Fourier series. Second, each exponential term of the form $\exp(\mathrm{i} a_n \cos(n x))$ or $\exp(\mathrm{i} b_n \sin(n x))$ is further expanded using the Jacobi–Anger identity:
\begin{align}
\begin{aligned}
\exp\left(\mathrm{i} V(\hat{Q}) \Delta t \right)
&\approx \exp\left(\mathrm{i} \Delta t \sum_n \bigl[ a_n \cos(n\hat{Q}) + b_n \sin(n\hat{Q}) \bigr] \right) \\
&\approx \prod_k \left( \sum_n \mathrm{i}^n J_n(a_k \Delta t) \mathrm{e}^{\mathrm{i} n k \xi \hat{Q}} \right)
             \left( \sum_n J_n(b_k \Delta t) \mathrm{e}^{\mathrm{i} n k \xi \hat{Q}} \right),
\end{aligned}
\end{align}
where the Jacobi–Anger expansions are
\begin{align}
\mathrm{e}^{\mathrm{i} t \cos \theta} = \sum_{n = -\infty}^{\infty} \mathrm{i}^n J_n(t) \mathrm{e}^{\mathrm{i} n \theta}, \qquad
\mathrm{e}^{\mathrm{i} t \sin \theta} = \sum_{n = -\infty}^{\infty} J_n(t) \mathrm{e}^{\mathrm{i} n \theta},
\end{align}
and $J_n$ denotes the $n$-th Bessel function of the first kind.  
This indirect construction incurs additional overhead due to the nested use of both Fourier and Jacobi–Anger expansions.

As an alternative, we adopt a direct Fourier-based approach in which the nonlinear bosonic phase gate $\exp(\mathrm{i} V(\hat{Q}) \Delta t)$ is expressed as a Laurent polynomial:
\begin{align}
\exp(\mathrm{i} V(\hat{Q}) \Delta t) \approx \sum_{n = -d}^{d} c_n \left( \mathrm{e}^{\mathrm{i} \frac{2\pi}{L} \hat{Q}} \right)^n,
\end{align}
where $c_n \in \mathbb{C}$ are Fourier coefficients and $V$ is an analytic potential function.  
This formulation is natively compatible with GQSP and avoids the compounded approximation cost associated with the Jacobi–Anger expansion.

\begin{figure} 
\includegraphics[width=0.95\linewidth]{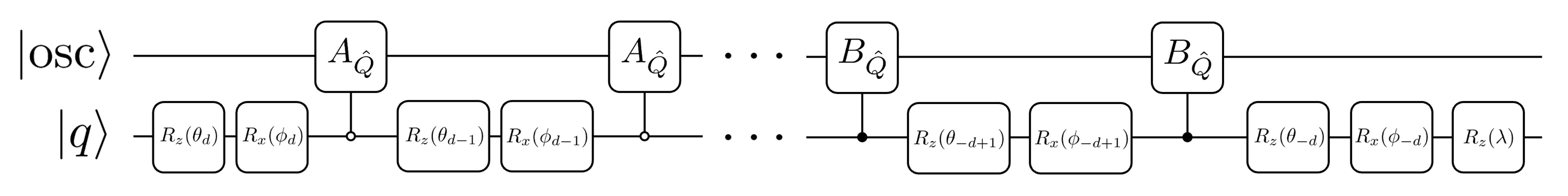}
\caption{\label{fig:OQ-GQSP}  Quantum circuit for OQ-GQSP. $\ket{\mathrm{osc}}$ is a quantum oscillator and $\ket{q}$ is a qubit register. The left half of the circuit generates positive powers, and the right half generates negative powers of the Laurent (Fourier) series terms.}
\end{figure}

Applying GQSP Theorem~\ref{thm:GQSP} to oscillator-qubit processors towards constructing state-dependent nonlinear bosonic phase gates through Fourier series approximation is established with the following Lemmas:
\begin{lemma}[Oscillator-Qubit Generalized Quantum Signal Processing (OQ-GQSP)]\label{lem:GQSP_qubit_oscillator_main}
Given a Z-basis controlled position displacement gate, one can construct complementary generalized signal operators
\begin{align}
\begin{aligned}
 A_{\hat{Q}} &= (\lvert 0 \rangle_q \langle 0 \rvert \otimes \hat{U} ) + (\lvert 1 \rangle_q \langle 1 \rvert \otimes \mathbb{I}_{\mathrm{osc}})\\ &= 
        \begin{bmatrix}
        \hat{U} & 0 \\
        0 & \mathbb{I}_{\mathrm{osc}}\\
        \end{bmatrix}, \\
 B_{\hat{Q}} &= (\lvert 0 \rangle_q \langle 0 \rvert \otimes \mathbb{I}_{\mathrm{osc}} ) + (\lvert 1 \rangle_q \langle 1 \rvert \otimes \hat{U}^{\dagger} )\\ &= 
        \begin{bmatrix}
        \mathbb{I}_{\mathrm{osc}}& 0 \\
        0 & \hat{U}^{\dagger}\\
        \end{bmatrix},\\ 
\end{aligned}
\end{align}
with one reusable ancilla qubit and 2 Z-basis CD gates, where $\hat{U} = \mathrm{e}^{\mathrm{i}\pi/L\hat{Q}}$. With these two signal operators, one can construct an arbitrary Fourier series $F(\hat{U})$ of the truncation order $d$ with the period $2L$ of the bosonic phase operator $\mathrm{e}^{\mathrm{i}\pi/L\hat{Q}}$ by Theorem~\ref{thm:GQSP}:
\begin{align}
\begin{aligned}
\operatorname{OQ-GQSP}:=&\mathrm{e}^{i\lambda\hat{\sigma}_z}\mathrm{e}^{i\phi_{-d}\hat{\sigma}_x} \mathrm{e}^{i\theta_{-d}\hat{\sigma}_z}\left( \prod_{r=-d+1}^{0} \hat{B}_{\hat{Q}} \mathrm{e}^{i\phi_{r}\hat{\sigma}_x} \mathrm{e}^{i\theta_{r}\hat{\sigma}_z}\right)\left( \prod_{r=1}^{d} \hat{A}_{\hat{Q}} \mathrm{e}^{i\phi_{r}\hat{\sigma}_x} \mathrm{e}^{i\theta_{r}\hat{\sigma}_z} \right) \\
=&\begin{pmatrix}
  F(\hat{U}) &-(G(\hat{U}))^{\dagger} \\
  G(\hat{U}) & (F(\hat{U}))^{\dagger} 
\end{pmatrix},
\end{aligned}
\end{align}
where $F(\hat{U}) = \sum\limits_{n=-d}^{d} c_n \mathrm{e}^{\mathrm{i}n\pi/L\hat{Q}}$ and $c_n$ is the Fourier series coefficient.
\end{lemma}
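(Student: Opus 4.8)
The plan is to prove the lemma in two movements: an explicit gate-level construction of the signal operators $A_{\hat{Q}}$ and $B_{\hat{Q}}$ from the native $\CD$ gate, followed by a spectral-calculus argument that lifts the scalar GQSP identity of Theorem~\ref{thm:GQSP} to the operator-valued statement by substituting $x\mapsto\hat{U}$. The substitution is the conceptual heart: everything else reduces to checking that the oscillator sector behaves like a scalar on each spectral component.

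For the construction I would start from the native conditional displacement $\CD_q(\theta)=\ket{0}_q\bra{0}\otimes \mathrm{e}^{+i\theta\hat{Q}}+\ket{1}_q\bra{1}\otimes \mathrm{e}^{-i\theta\hat{Q}}$ at the half-angle $\theta=\pi/(2L)$, and compose it with an unconditional single-mode phase rotation $R_{\pm}=\mathbb{I}_q\otimes \mathrm{e}^{\pm i\pi/(2L)\hat{Q}}$, which is a native Gaussian operation. A direct multiplication gives $R_{+}\,\CD_q(\pi/(2L))=\ket{0}_q\bra{0}\otimes \mathrm{e}^{+i\pi/L\hat{Q}}+\ket{1}_q\bra{1}\otimes \mathbb{I}_{\mathrm{osc}}=A_{\hat{Q}}$ and $R_{-}\,\CD_q(\pi/(2L))=B_{\hat{Q}}$, with $\hat{U}=\mathrm{e}^{i\pi/L\hat{Q}}$ exactly as claimed. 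Since the phase rotations are not counted as $\CD$ gates and each signal operator consumes one $\CD$, the pair $(A_{\hat{Q}},B_{\hat{Q}})$ is built from $2$ $\CD$ gates and a single ancilla qubit that is reused across all $2d+1$ signal-processing steps.

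For the lift I would invoke the spectral theorem for the self-adjoint $\hat{Q}$, writing $\hat{U}=\int_{\mathbb{R}}\mathrm{e}^{i\pi q/L}\,dE(q)$ so that the spectrum of $\hat{U}$ lies entirely on $\mathbb{T}$. Every single-qubit rotation $\mathrm{e}^{i\phi_r\hat{\sigma}_x},\,\mathrm{e}^{i\theta_r\hat{\sigma}_z}$ acts as $\mathbb{I}_{\mathrm{osc}}$ and hence commutes with the block-diagonal signal operators, so on each spectral component the entire OQ-GQSP product is identical to the scalar GQSP product evaluated at $x=\mathrm{e}^{i\pi q/L}\in\mathbb{T}$. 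Because $\hat{A},\hat{B}$ in Theorem~\ref{thm:GQSP} are precisely $\mathrm{diag}(x,1)$ and $\mathrm{diag}(1,x^{-1})$, and $A_{\hat{Q}},B_{\hat{Q}}$ are their images under $x\mapsto\hat{U}$ (with $x^{-1}\mapsto\hat{U}^{\dagger}$, valid since $\hat{U}$ is unitary), integrating the scalar identity against $dE(q)$ yields the claimed block form with $F(\hat{U})=\sum_{n=-d}^{d}c_n\hat{U}^n$ and $G(\hat{U})$ operator-valued. The completeness relation $|F(x)|^2+|G(x)|^2=1$ on $\mathbb{T}$ likewise integrates to $F(\hat{U})F(\hat{U})^{\dagger}+G(\hat{U})G(\hat{U})^{\dagger}=\mathbb{I}_{\mathrm{osc}}$, confirming that OQ-GQSP is unitary.

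The main obstacle I anticipate is making the substitution rigorous despite $\hat{Q}$ being unbounded: one must confirm that $F,G$, being Laurent polynomials and hence bounded continuous functions on $\mathbb{T}$, define bounded operators $F(\hat{U}),G(\hat{U})$ through the continuous functional calculus, and that the formal replacement $x\mapsto\hat{U}$ commutes with the finite product of operators. This is routine once one notes that $\hat{U}$ is itself bounded and unitary and that every factor in the product is bounded, so no domain subtleties arise. A secondary point is to observe that the parameter set $(\boldsymbol\theta,\boldsymbol\phi,\lambda)$ supplied by Theorem~\ref{thm:GQSP} is determined solely by the target coefficients $c_n$ and is independent of the signal value $x$, so the same angles that realize the scalar Laurent polynomial realize the operator Fourier series without modification.
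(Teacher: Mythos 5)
Your proposal is correct in substance, and its core trick---composing two half-angle ($\pi/2L$) momentum displacements so that the $\ket{0}$ branch accumulates $\hat U=\mathrm{e}^{\mathrm{i}\pi\hat Q/L}$ while the $\ket{1}$ branch returns to the identity---is exactly the paper's. Where you diverge is in where the unconditional half-displacement comes from. You take $R_{\pm}=\mathbb{I}_q\otimes \mathrm{e}^{\pm \mathrm{i}\pi\hat Q/(2L)}$ as a free native Gaussian gate, so each signal operator costs one $\CD$ gate and no ancilla is ever touched; but the lemma's premise grants only the Z-basis $\CD$ gate, and your mention of a ``reusable ancilla'' is vestigial since your circuit never uses one. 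The paper instead manufactures the unconditional displacement from the given primitive itself: a second $\CD$ gate acting on a dedicated ancilla held in $\ket{0}_a$ acts as $\mathrm{e}^{\mathrm{i}\pi\hat Q/(2L)}$ on the oscillator, and the sequence $\CD_m\,\CD_a$ on $(\alpha\ket{0}_m+\beta\ket{1}_m)\ket{0}_a\ket{\mathrm{osc}}$ returns the ancilla to $\ket{0}_a$ unentangled---which is precisely why the lemma's resource count reads ``one reusable ancilla qubit and 2 Z-basis CD gates.'' The discrepancy is cosmetic, not fatal: once you note that $R_{\pm}$ \emph{is} a $\CD$ gate whose control is frozen in a computational basis state, the two constructions coincide, including the gate count. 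On the other side of the ledger, your spectral-calculus lift of Theorem~\ref{thm:GQSP} from scalars $x\in\mathbb{T}$ to the unitary $\hat U$---applying the scalar identity fiberwise against the spectral measure of $\hat Q$, noting the qubit rotations are constant in the oscillator, and observing that boundedness of $\hat U$ removes any domain issues with the unbounded $\hat Q$---is genuinely more careful than the paper, which leaves that step implicit, as is your remark that the GQSP angles depend only on the coefficients $c_n$ and not on the signal value.
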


The proof follows directly from the symmetry of the Z-basis displacement operator shown in Appendix C. OQ-GQSP can generate arbitrary bosonic phase functions with Fourier series approximation, which is possible by removing the parity conditions of the previous hybrid oscillator-qubit Z-basis QSP in Ref~\cite{liu2024toward}. Armed with Lemma~\ref{lem:GQSP_qubit_oscillator_main}, state-dependent nonlinear bosonic phase gates can be implemented on a hybrid oscillator-qubit processor described with the abstract quantum circuit in Fig.~\ref{fig:state_dependent_nonlinear_bosonic_phase_gate}.

\begin{figure}
\includegraphics[width=8cm]{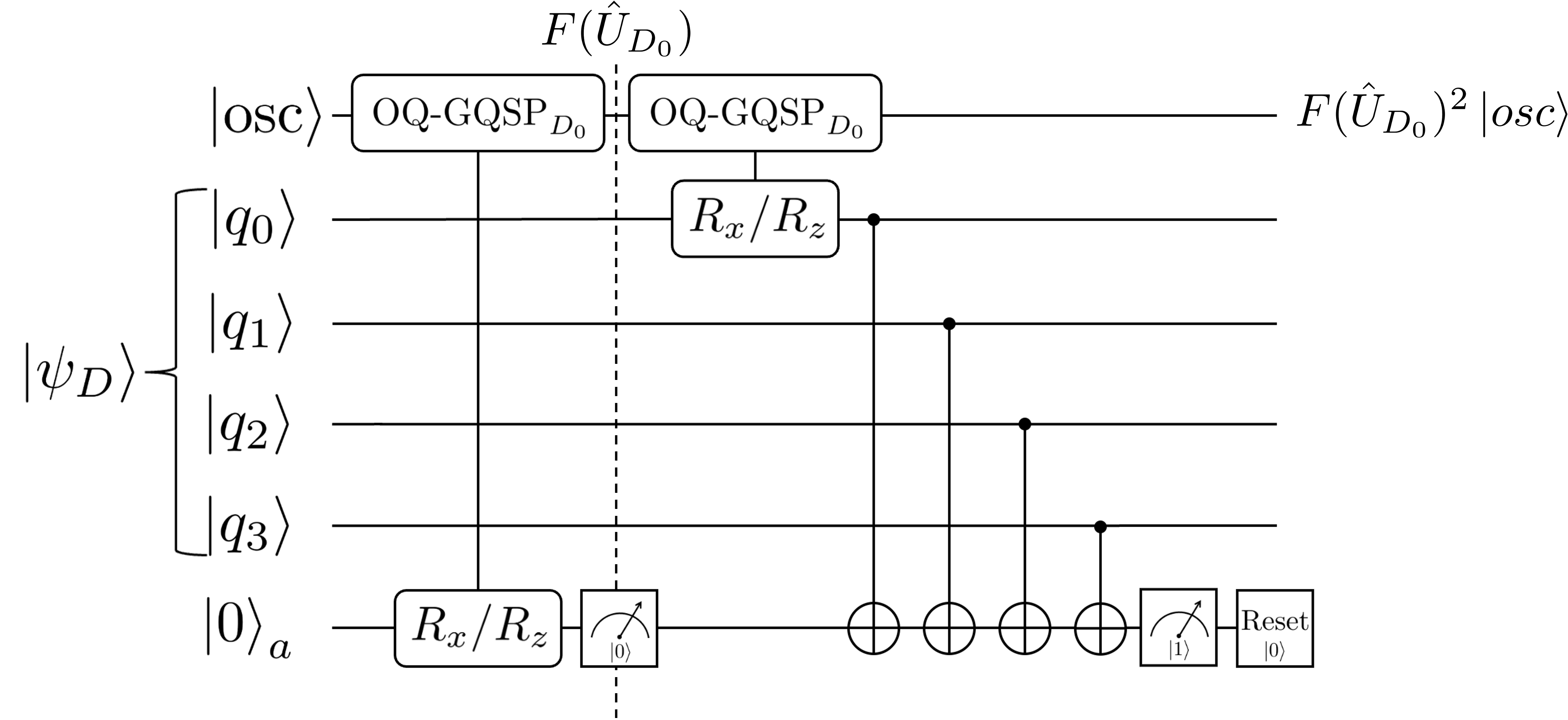}
\caption{\label{fig:state_dependent_nonlinear_bosonic_phase_gate} Quantum circuit for state-dependent nonlinear bosonic phase gate with OQ-GQSP. $\ket{\psi_D}$ is a electronic state encoded with unary code.}
\end{figure}

\begin{lemma}[OQ-GQSP to state-dependent nonlinear bosonic phase gate]\label{lem:GQSP_qubit_oscillator_state_dependent}
Given a state in oscillator-qubit processor $\ket{\psi_D}\ket{\mathrm{osc}}\ket{0}_a$ with inverted-unary-encoded qubits $\ket{\psi_{D}} = a_0\ket{D_0}+a_1\ket{D_1}+a_2\ket{D_2}+a_3\ket{D_3}$, an oscillator $\ket{\mathrm{osc}}$, and an ancilla qubit $\ket{0}_a$, we can implement a state-dependent nonlinear bosonic phase gate
$\mathrm{e}^{i\ket{D_n}\bra{D_n}V_{nr}(\hat{Q}_{r})}$
with two OQ-GQSP operators in Lemma~\ref{lem:GQSP_qubit_oscillator_main} and two measurements with which each success probability is $1-\delta$ where $\delta = \lvert | G(\hat{U})  \rvert |^2 \ll 1.$
\end{lemma}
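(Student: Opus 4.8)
The plan is to build $\mathrm{e}^{i\ket{D_n}\bra{D_n}V_{nr}(\hat Q_r)}$ by promoting the single-block OQ-GQSP of Lemma~\ref{lem:GQSP_qubit_oscillator_main} to an electronically controlled block-encoding, and then to show that postselecting the ancilla realizes the target map on the whole superposition $\ket{\psi_D}$ with a \emph{uniform} success probability. First I would fix the Fourier coefficients $c_k$ so that $F(\hat U)=\sum_{k=-d}^{d}c_k\,\mathrm{e}^{ik\pi\hat Q_r/L}$ approximates the unitary phase $\mathrm{e}^{iV_{nr}(\hat Q_r)}$ on the support of $\ket{\mathrm{osc}}$. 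By Lemma~\ref{lem:GQSP_qubit_oscillator_main} the resulting $F$ and its complement $G$ obey $|F(x)|^2+|G(x)|^2=1$ on $\mathbb T$, and since $\hat U=\mathrm{e}^{i\pi\hat Q_r/L}$ is unitary, the spectral theorem upgrades this pointwise identity to the operator identity $F(\hat U)^{\dagger}F(\hat U)+G(\hat U)^{\dagger}G(\hat U)=\mathbb I_{\mathrm{osc}}$, which will control both the heralding and the error.

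Next I would make the block conditional on the electronic register. In the inverted-unary code $\ket{D_n}$ is flagged by the $n$-th qubit being $\ket{0}$, so replacing each Z-basis CD gate inside $A_{\hat Q},B_{\hat Q}$ by the corresponding multi-controlled displacement (MCD) introduced for the off-diagonal couplings makes the signal operators apply $\hat U$ only on the $\ket{D_n}$ component and act as the identity ($x=1$) on every $\ket{D_m}$, $m\neq n$; the single-qubit $\mathrm{e}^{i\phi_r\hat\sigma_x}\mathrm{e}^{i\theta_r\hat\sigma_z}$ rotations touch only the ancilla and leave the oscillator untouched in all branches. Evolving the input through the conditioned OQ-GQSP then gives
\begin{align*}
&\sum_k a_k\ket{D_k}\ket{\mathrm{osc}}\ket{0}_a \longmapsto \\
&\quad a_n\ket{D_n}\big(F(\hat U)\ket{\mathrm{osc}}\ket{0}_a+G(\hat U)\ket{\mathrm{osc}}\ket{1}_a\big)+\sum_{m\neq n}a_m\ket{D_m}\ket{\mathrm{osc}}\big(F(1)\ket{0}_a+G(1)\ket{1}_a\big),
\end{align*}
so projecting the ancilla onto $\ket{0}_a$ applies $F(\hat U)\approx\mathrm{e}^{iV_{nr}}$ on the target branch and the scalar $F(1)$ on the rest.

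The main obstacle is to turn this heralded-but-contractive, branch-dependent map into the exact isometry $\mathrm{e}^{i\ket{D_n}\bra{D_n}V_{nr}}$ on the full superposition. Two effects must be reconciled: $F(\hat U)$ is a contraction ($|F|<1$ wherever $G\neq0$), and the residual scalar $F(1)\neq1$ on the non-target branches would reweight the amplitudes $a_m$ under postselection. I expect to resolve both with the \emph{second} OQ-GQSP block and its measurement: the second block, run with the complementary coefficients, corrects the position-dependent amplitude damping, while a fixed diagonal phase compensating $F(1)$ restores the identity on the $m\neq n$ branches, so that the two electronic branches are equalized. The decisive point is the uniform herald bound: from $F(\hat U)^{\dagger}F(\hat U)=\mathbb I-G(\hat U)^{\dagger}G(\hat U)$ one obtains $\Pr[\mathrm{success}]=\|F(\hat U)\ket{\mathrm{osc}}\|^2\ge 1-\|G(\hat U)\|^2_{\mathrm{op}}=1-\delta$, independently of $\ket{\mathrm{osc}}$ and of the electronic branch. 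Hence each of the two measurements succeeds with probability $1-\delta$ and the postselection is nearly coherent, while the redundancy of the inverted-unary code ensures no leakage between electronic subspaces, so the surviving state retains the coefficients $a_k$ up to $\bigO(\delta)$.

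Finally I would collect the error. The deviation of the heralded channel from the ideal state-dependent phase is bounded by the Fourier truncation error $\|F(\hat U)-\mathrm{e}^{iV_{nr}(\hat Q_r)}\|$ on the oscillator support together with the $\bigO(\delta)$ postselection bias, both of which shrink as the approximation $|F|\to1$ improves. This yields the stated gate implemented by two OQ-GQSP operators and two measurements, each heralding with probability $1-\delta$ and $\delta=\|G(\hat U)\|^2_{\mathrm{op}}\ll1$.
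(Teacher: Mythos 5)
There is a genuine gap, and it is twofold. First, your construction does not use the resources the lemma allows. The lemma (and the paper's proof) requires only the two operators of Lemma~\ref{lem:GQSP_qubit_oscillator_main}, i.e., GQSP blocks built from plain Z-basis $\CD$ gates; your conditioned signal operators—$\hat U$ on the $\ket{D_n}$ branch, the identity on all others—need displacements controlled \emph{simultaneously} on the ancilla and on electronic qubit $n$, i.e., MCD gates inside every step of the GQSP sequence. That is a strictly stronger primitive, and it undercuts the compilation's point: the paper reserves MCDs for the off-diagonal couplings and counts only $\CD$ queries in the diagonal sector (Theorem~\ref{thm:GQSP_VCdynamics}). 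The paper obtains state dependence without any multi-controlled gate: its first block takes the \emph{ancilla} as the GQSP qubit and applies $F(\hat U)$ to every electronic branch; its second block takes \emph{electronic qubit $n$ itself} as the GQSP qubit, so the $\ket{D_n}$ branch (qubit $n$ in $\ket{0}$) receives $F(\hat U)$ again while every other branch (qubit $n$ in $\ket{1}$) receives $F(\hat U)^{\dagger}$. The net action is $F(\hat U)^2\approx \mathrm{e}^{\mathrm{i}V_{nr}(\hat Q_r)}$ on $\ket{D_n}$ (one designs $F\approx\mathrm{e}^{\mathrm{i}V_{nr}/2}$) and $F^{\dagger}F\approx\mathbb{I}_{\mathrm{osc}}$ elsewhere—the uniform $F$ from the first block is cancelled rather than avoided. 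The leakage out of the unary code space produced by the second block's $\sigma_x$ rotations on qubit $n$ (components like $\ket{1111}$, $\ket{1100}$) is exactly what the second, parity-type measurement heralds away with probability $1-\delta$; this is where the lemma's two measurements come from.

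Second, the step you rely on to close your own argument fails. A second heralded GQSP block with ``complementary coefficients'' cannot correct the position-dependent damping $|F_1(\hat U)|$: on its heralded branch any GQSP block is a contraction, $|F_2(x)|\le 1$ on $\mathbb{T}$, so $F_2(\hat U)F_1(\hat U)$ is unitary only if $|F_1|\equiv 1$, which contradicts $|F_1|^2+|G_1|^2=1$ with $G_1\not\equiv 0$. At best you shrink the damping, and you prove no bound for it. (The damping is already $\mathcal{O}(\delta)$ and should simply be absorbed into the error budget—this is effectively what the paper does when it replaces $F^{\dagger}F$ by $\mathbb{I}_{\mathrm{osc}}$.) Your first-block analysis—the $x=1$ evaluation giving the scalar $F(1)$ on non-target branches, and the herald bound from $|F|^2+|G|^2=1$—is sound as far as it goes, and with a deterministic phase on qubit $n$ it would yield the gate up to $\mathcal{O}(\delta)$ error; but it proves a different, MCD-based statement, and the portion of your argument meant to finish the proof does not stand.
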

This lemma completes our ISA for simulating anharmonic vibronic dynamics. The inverted unary encoding enables selective application of anharmonic potentials. The high success probability $(1-\delta)$ with $\delta \ll 1$ makes this approach feasible for dynamics simulations of intermediate size molecules. Figure~\ref{fig:state_dependent_nonlinear_bosonic_phase_gate} depicts the circuit implementation, showing how the ancilla qubit mediates between the electronic state and the oscillator to produce the desired state-dependent evolution. With both off-diagonal couplings via MCD gates and diagonal anharmonic potentials via OQ-GQSP now compilable, we can construct the complete quantum circuit for simulating uracil cation dynamics. 

Based on Lemma~\ref{lem:GQSP_qubit_oscillator_main} and Lemma~\ref{lem:GQSP_qubit_oscillator_state_dependent}, we can write the complexity analysis for implementing a single Trotter step of anharmonic vibronic dynamics with OQ-GQSP. The proof is given in Appendix C.
\begin{theorem}[Anharmonic vibronic dynamics with OQ–GQSP]
\label{thm:GQSP_VCdynamics}
Let $N$ be the number of electronic states, $M$ the number of
vibrational modes, and $M'$ the number of anharmonic modes. For
\begin{align}
  \hat{H}_{\mathrm{VC}}
  =\sum_{n=0}^{N-1}\hat H_{\mathrm{diag}}^{(n)}
    +\!\!\sum_{\substack{n,m=0 \\ n \ne m}}^{N-1} \hat H_{\mathrm{off}}^{(n,m)} ,
\end{align}
set
\begin{align}
  \hat H_{\mathrm{diag}}^{(n)}
  =\sum_{r=0}^{M-1}\ket n\!\bra n\otimes f_{nr}(\hat Q_r),\qquad
  \hat H_{\mathrm{off}}^{(n,m)}
  =(\ket n\!\bra m)
    \otimes\sum_{r=0}^{M-1}\lambda_{nmr}\hat Q_r .
\end{align}

Fix $L>0$. For each pair $(n,r)$ set $g_{nr}(x):=\exp\!\bigl(\mathrm i\,\Delta t f_{nr}(x)\bigr)$. Assume the analyticity and boundedness hypotheses of Lemma~\ref{lem:trunc-L} for each $g_{nr}(x)$. Let $\varepsilon\in(0,1)$ denote the total target error and split it equally between product–formula and Fourier–truncation contributions:
\begin{align}
  \varepsilon=\varepsilon_{\mathrm{Trot}}+\varepsilon_{\mathrm{Fourier}},
  \qquad
  \varepsilon_{\mathrm{Trot}}=\varepsilon_{\mathrm{Fourier}}=\varepsilon/2.
\end{align}

Let $\Delta t\in(0,1]$ and $\varepsilon_{\mathrm{Trot}}\in(0,1)$ and choose $\Delta t$ so that
\begin{align}
  \frac{\Delta t^{2}}{2}\,
  \Bigl\|\Bigl[\sum_{n}\hat H_{\mathrm{diag}}^{(n)}\!,\ \sum_{n \ne m}\hat H_{\mathrm{off}}^{(n,m)}\Bigr]\Bigr\|
  \le \varepsilon_{\mathrm{Trot}} .
\end{align}
Then OQ–GQSP yields a unitary
\begin{align}
  \tilde U
  =\prod_{n=0}^{N-1}\mathrm{e}^{-i\hat H_{\mathrm{diag}}^{(n)}\Delta t}
    \prod_{\substack{n,m=0 \\ n \ne m}}^{N-1}\mathrm{e}^{-i\hat H_{\mathrm{off}}^{(n,m)}\Delta t}
\end{align}
such that $\bigl\|\mathrm{e}^{-i\hat H_{\mathrm{VC}}\Delta_t}-\tilde U\bigr\|
\le \varepsilon$,
using $\mathcal{O}\!\bigl(N M'\ln\varepsilon^{-1} + N^2 M\bigr)$
queries to $\mathrm{CD}$ gates and achieving success probability
$\,(1-\delta)^{2M'N}$. 
\end{theorem}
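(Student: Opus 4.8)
The plan is to route the total error through a single intermediate object—the exact first-order product formula—and bound the two resulting pieces separately, matching the split $\varepsilon=\varepsilon_{\mathrm{Trot}}+\varepsilon_{\mathrm{Fourier}}$ fixed in the statement. Concretely, I would write $U_{\mathrm{Trot}}:=\prod_{n}\mathrm{e}^{-i\hat H_{\mathrm{diag}}^{(n)}\Delta t}\prod_{n\ne m}\mathrm{e}^{-i\hat H_{\mathrm{off}}^{(n,m)}\Delta t}$ for the ideal ordered product, and let $\tilde U$ be the unitary the circuit actually realizes, in which every off-diagonal factor is implemented exactly by the MCD construction of the preceding section and every diagonal factor is replaced by its OQ-GQSP synthesis (the success-conditioned block of Lemma~\ref{lem:GQSP_qubit_oscillator_state_dependent}). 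The triangle inequality then gives
\begin{align}
\bigl\|\mathrm{e}^{-i\hat H_{\mathrm{VC}}\Delta t}-\tilde U\bigr\|
\le
\bigl\|\mathrm{e}^{-i\hat H_{\mathrm{VC}}\Delta t}-U_{\mathrm{Trot}}\bigr\|
+\bigl\|U_{\mathrm{Trot}}-\tilde U\bigr\|.
\end{align}
The first term is the standard Lie–Trotter error: grouping the Hamiltonian into the two blocks $\sum_n\hat H_{\mathrm{diag}}^{(n)}$ and $\sum_{n\ne m}\hat H_{\mathrm{off}}^{(n,m)}$, the leading-order deviation of the single-step ordered product is controlled by $\tfrac{\Delta t^2}{2}\bigl\|[\sum_n\hat H_{\mathrm{diag}}^{(n)},\sum_{n\ne m}\hat H_{\mathrm{off}}^{(n,m)}]\bigr\|$, which the hypothesis on $\Delta t$ forces to be $\le\varepsilon_{\mathrm{Trot}}$. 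Here I would note that the internal ordering within each block is immaterial: the diagonal factors act on mutually orthogonal electronic sectors, so only the single cross-block commutator survives at leading order.

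Next I would bound the synthesis term $\|U_{\mathrm{Trot}}-\tilde U\|$. Since the two products differ only in the diagonal factors, I would telescope, using that the operator-norm distance between products of unitaries is subadditive in the per-factor distances; the exactly realized off-diagonal factors drop out, so it suffices to control each single-mode phase $g_{nr}(x)=\exp(i\Delta t f_{nr}(x))$. Invoking Lemma~\ref{lem:trunc-L}, the analyticity and boundedness hypotheses guarantee that the truncated Fourier (Laurent) approximant of $g_{nr}$ on period $2L$ converges exponentially in the truncation order $d$; hence $d=\mathcal{O}(\ln\varepsilon^{-1})$ makes each per-term error at most $\varepsilon_{\mathrm{Fourier}}/(NM')$, and summing over the $\mathcal{O}(NM')$ anharmonic single-mode factors yields $\|U_{\mathrm{Trot}}-\tilde U\|\le\varepsilon_{\mathrm{Fourier}}$. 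Lemma~\ref{lem:GQSP_qubit_oscillator_main} then certifies that this Laurent polynomial is realizable by a degree-$d$ OQ-GQSP block—after the normalization $|F(\hat U)|^2+|G(\hat U)|^2=1$ demanded by Theorem~\ref{thm:GQSP}—and Lemma~\ref{lem:GQSP_qubit_oscillator_state_dependent} promotes it to the state-dependent phase acting on the correct electronic sector via the inverted unary encoding.

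I would then assemble the resource counts. Each degree-$d$ OQ-GQSP block consumes $\mathcal{O}(d)$ CD gates, so the $\mathcal{O}(NM')$ anharmonic factors contribute $\mathcal{O}(NM'\ln\varepsilon^{-1})$ CD queries; the off-diagonal linear couplings are realized by the constant-depth MCD construction, once per $(n,m)$ pair and mode, giving $\mathcal{O}(N^2M)$ CD queries into which the constant-depth harmonic diagonal pieces also fold. Adding these yields the stated $\mathcal{O}(NM'\ln\varepsilon^{-1}+N^2M)$. For the success probability, Lemma~\ref{lem:GQSP_qubit_oscillator_state_dependent} shows each state-dependent nonlinear phase gate requires two postselected measurements, each succeeding with probability $1-\delta$; since the $\mathcal{O}(NM')$ anharmonic gates are applied in sequence on the reusable ancilla and their successes are independent, the overall success probability is $(1-\delta)^{2M'N}$.

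I expect the main obstacle to be the synthesis step, specifically the simultaneous control of the Fourier-truncation error, the GQSP normalization, and the postselection gap $\delta$. The difficulty is that the realizability condition $|F(\hat U)|^2+|G(\hat U)|^2=1$ forces $F$ to be subnormalized, so the exact unit-modulus target $g_{nr}(\hat U)$ cannot be reproduced exactly; one must verify that the \emph{same} analyticity and boundedness hypotheses that make $d=\mathcal{O}(\ln\varepsilon^{-1})$ suffice for the Fourier error also keep $\delta=\||G(\hat U)|\|^2\ll1$, so that the $2M'N$-fold product $(1-\delta)^{2M'N}$ stays bounded away from zero. Tying the approximation accuracy, the logarithmic degree scaling, and the success probability to one coherent set of hypotheses on the $g_{nr}$—while ensuring the telescoping over all modes and electronic states does not erode the $\log$ scaling—is where the real care is required.
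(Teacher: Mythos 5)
Your proposal is correct and follows essentially the same route as the paper's own proof: a triangle-inequality split into the first-order Trotter error (controlled by the commutator hypothesis on $\Delta t$) plus the diagonal synthesis error, Fourier truncation via Lemma~\ref{lem:trunc-L}, realization of each $g_{nr}(\hat Q_r)$ through Lemmas~\ref{lem:GQSP_qubit_oscillator_main} and~\ref{lem:GQSP_qubit_oscillator_state_dependent}, and the identical query and success-probability accounting. If anything, your telescoping with the per-term budget $\varepsilon_{\mathrm{Fourier}}/(NM')$ is \emph{more} careful than the paper's proof, which bounds each remainder by $\|R_{nr}^{(d)}\|_\infty\le\varepsilon_{\mathrm{Fourier}}$ without summing the $NM'$ synthesis errors; the only cost of your tightening is a harmless additive $\ln(NM')$ inside the degree $d$, absorbed into the stated $\mathcal{O}\bigl(NM'\ln\varepsilon^{-1}+N^2M\bigr)$.
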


\subsection{Trotterization and circuit structure}

The time evolution operator is implemented through Trotterization. The first-order Trotter decomposition becomes:
\begin{align}
\mathrm{e}^{-\mathrm{i}\hat{H}_{\text{VC}}\Delta t} \approx  \left(\prod_{r=0}^{M-1}\prod_{n=0}^{N-1} \mathrm{e}^{-\mathrm{i}\hat{H}_{\mathrm{diag}}^{(n,r)}\Delta t}\right)\left(\prod_{r=0}^{M-1}\prod_{\substack{n,m=0 \\ n \ne m}}^{N-1} \mathrm{e}^{-\mathrm{i}\hat{H}_{\mathrm{off-diag}}^{(m,n,r)}\Delta t}\right) + \mathcal{O}(\Delta t^2),
\label{eq:trotter}
\end{align}
where $\hat{H}_{\mathrm{diag}}^{(n,r)}$ contains both harmonic and anharmonic contributions for electronic state $n$ and mode $r$, while $\hat{H}_{\mathrm{off}}^{(n,m,r)}$ represents the LVC between states $n$ and $m$ through mode $r$. Diagonal terms are implemented with OQ-GQSP or standard CD gates for harmonic potentials, and off-diagonal terms with MCD gates.

\begin{figure}
\includegraphics[width=8.0cm]{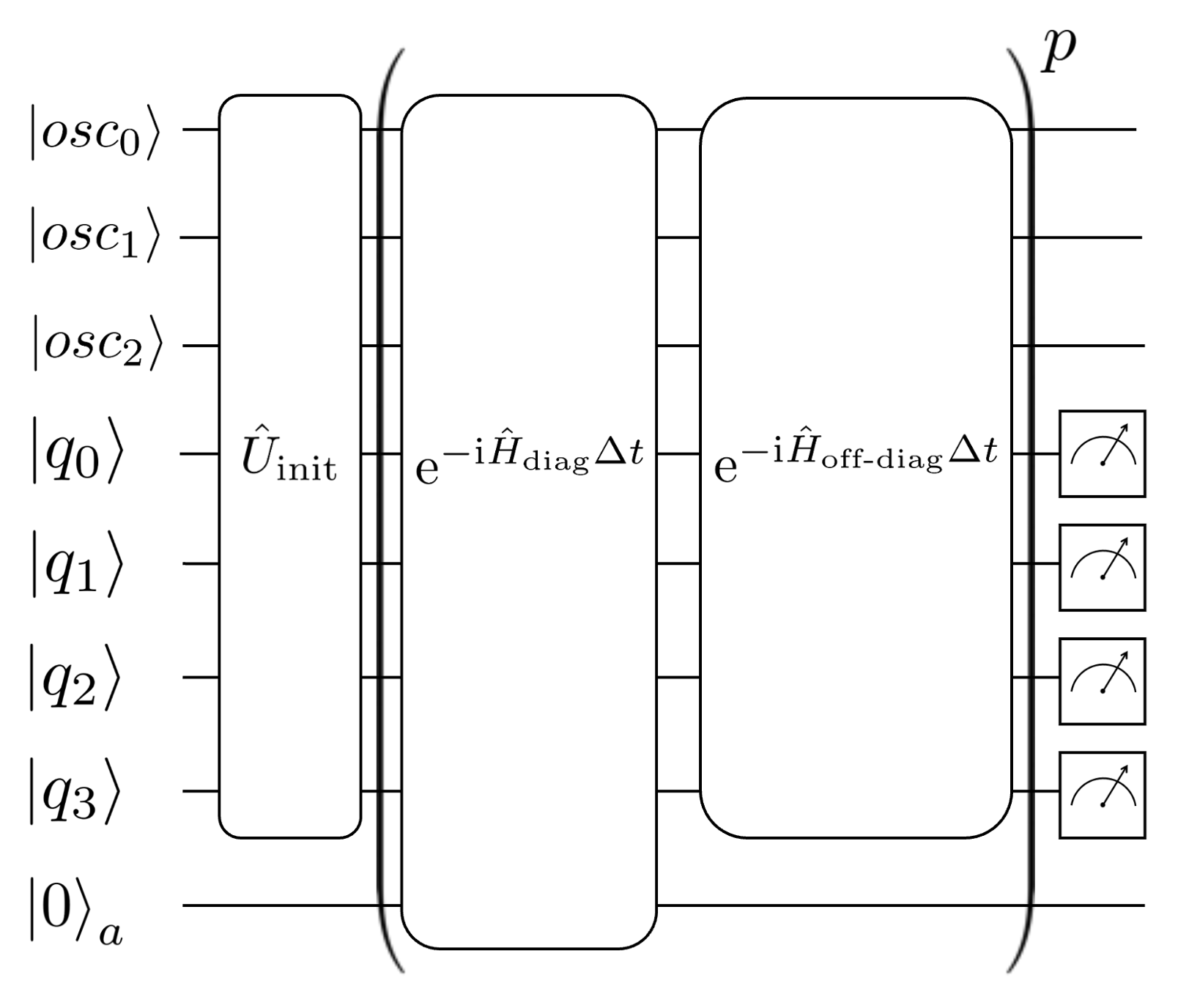}
\caption{\label{fig:quantumcircuit} Abstract quantum circuit for uracil cation Hamiltonian dynamics simulation with $N=4$ electronic states and $M=3$ vibrational modes. The circuit consists of three stages: (i) initialization in a product state, (ii) Trotterized time evolution alternating between off-diagonal couplings and diagonal potentials, and (iii) measurement of electronic populations.}
\end{figure}

For example, to simulate the electronic population dynamics $P_n(t) = \langle\psi(t)|\hat{P}_n|\psi(t)\rangle$ with $\hat{P}_n = |D_n\rangle\langle D_n| \otimes \mathbb{I}_{\mathrm{osc}}$, we evolve the initial state $|\psi_0\rangle$ for time $t$ divided into $p$ Trotter steps as described in Figure~\ref{fig:quantumcircuit} and first‑order product formula in Eq.~\eqref{eq:trotter}. 

We target a total simulation error $\varepsilon$ and split it into a Trotterization and Fourier-truncation errors, $ \varepsilon_{\mathrm{Trot}}=\varepsilon/2,\ \varepsilon_{\mathrm{Fourier}}=\varepsilon/2$.
Let
\begin{align}
\Gamma \;:=\;
\Bigl\|\Bigl[\;\sum_{n,r}\hat H^{(n,r)}_{\mathrm{diag}}\;,\;
             \sum_{r}\sum_{n<m}\hat H^{(n,m,r)}_{\mathrm{off}}\Bigr]\Bigr\|
\;+\;
\sum_{r}\!\!\!\!\sum_{\substack{(n,m)\neq(n',m')\\ n<m,\;n'<m'}}
\bigl\|\,[\hat H^{(n,m,r)}_{\mathrm{off}},\hat H^{(n',m',r)}_{\mathrm{off}}]\,\bigr\|
\end{align}
denote the commutator error bound. With the standard estimate
$\|U(t)-U_{\mathrm{Trot}}(t;p)\|\le \Gamma t^{2}/p$,
we choose the step size and number of steps as
\begin{align}
\Delta t=\frac{\varepsilon_{\mathrm{Trot}}}{\Gamma\, t},
\qquad
p=\Bigl\lceil \frac{\Gamma\, t^{2}}{\varepsilon_{\mathrm{Trot}}}\Bigr\rceil.
\end{align}
The Fourier degree for each diagonal block follows the strip‑analytic bound in Appendix C, $d=\mathcal O\!\big(\ln(p/\varepsilon_{\rm Fourier})\big)$. Per Trotter layer, the diagonal OQ–GQSP synthesis incurs $\bigO(NM'd)$ CD queries, and the off–diagonal couplings contribute $\bigO (N^{2}M)$ MCD queries, for a total of $\mathcal{O}(NM'd+N^{2}M)$ queries. If each heralded primitive fails with probability at most $\delta$, a layer succeeds with probability at least $(1-\delta)^{2NM'}$. Across $p$ layers, the total success probability is $(1-\delta)^{2NM'p}$.

\section{Numerical experiments} \label{sec:numericalexperiments}

We present numerical experiments to give empirical evidence for OQ-GQSP and its application to nonadiabatic dynamics. These experiments complement Lemma~\ref{lem:GQSP_qubit_oscillator_main}, Lemma~\ref{lem:GQSP_qubit_oscillator_state_dependent} and Theorem~\ref{thm:GQSP_VCdynamics} by providing numerical guidance on when OQ‑GQSP attains the targeted accuracy.

\subsection{Nonlinear bosonic phase gate and state preparation}

We first examine the construction of a state-dependent Morse potential for the $\nu_{26}$ carbonyl stretch mode in the $D_2$ electronic state. This mode exhibits a strong anharmonicity in the uracil cation system.

\begin{figure}
\includegraphics[width=0.95\linewidth]{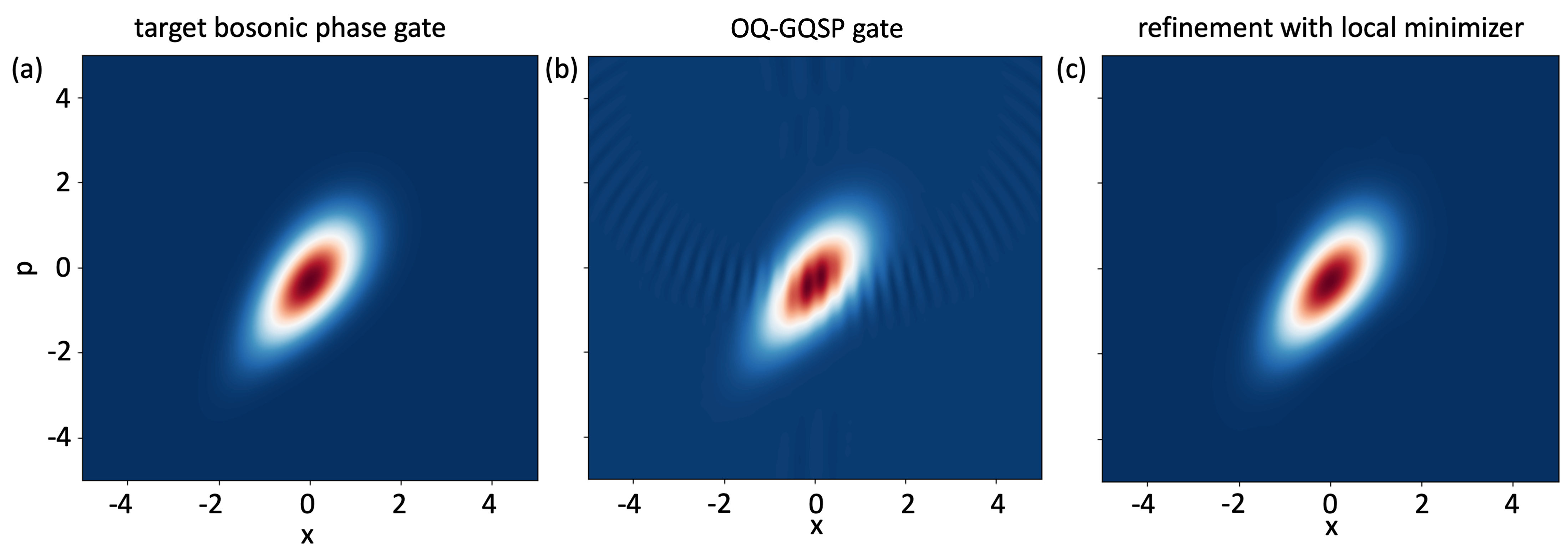}
\caption{\label{fig:Wigner} Wigner function representation of the nonlinear bosonic phase gate for the $\nu_{26}$ mode Morse potential in electronic state $D_2$. (a) Target gate $\mathrm{e}^{-iV_{26}^{(D_2)}(\hat{Q})\Delta t}$ with $\Delta t = 0.3$ fs. (b) OQ-GQSP approximation using $d=39$ Fourier components, achieving fidelity $F=0.9229$. (c) Numerically optimized gate using vacuum-state refinement, reaching $F=0.9999$.}
\end{figure}

Figure~\ref{fig:Wigner} presents the Wigner function representation of the target and approximated gates. The target gate implements the Morse potential $V_{26}^{(D_2)}(\hat{Q})$ for a time step $\Delta t = 0.3$ fs. Using $d=39$ Fourier components, the initial OQ-GQSP approximation achieves a fidelity of $F = |\langle\psi_0| \hat U_{\text{target}}^\dagger \hat U_{\text{OQ-GQSP}}|\psi_0\rangle|^2 = 0.9229$ with the general state. The truncation error $\varepsilon$ can be reduced with increasing the number of Fourier series terms $d=\mathcal{O}(\ln(\varepsilon^{-1}))$, where the proof is given in the appendix C. The characteristic interference pattern visible in Fig.~\ref{fig:Wigner} (b) stems from the finite Fock state truncation with the iterative gate sequence from OQ-GQSP.

To improve the approximation, we employ a local optimization procedure that refines the GQSP angles $\{\theta_j, \phi_j, \lambda \}$ to maximize the fidelity with vacuum-state dependency. This approach works well for dynamics near the reference geometry or initial state preparation before bosonic algorithms such as analog quantum phase estimation~\cite{lin2025analog}. However, general applications with significant vibrational excitation may require the unoptimized gates to maintain robust accuracy across the full phase space. After optimization, the fidelity improves to 0.9999, with the Wigner function becoming visually indistinguishable from the target Fig.~\ref{fig:Wigner} (c).

\subsection{Nonadiabatic dynamics simulation}
To validate the OQ-GQSP method for molecular dynamics, we numerically simulate the time evolution of a reduced 2-mode model of uracil cation including the significant modes in energy spectra $\nu_{21}$ and anharmonic mode $\nu_{26}$ represented with Morse potentials.

\begin{figure} [!b]
\includegraphics[width=0.95\linewidth]{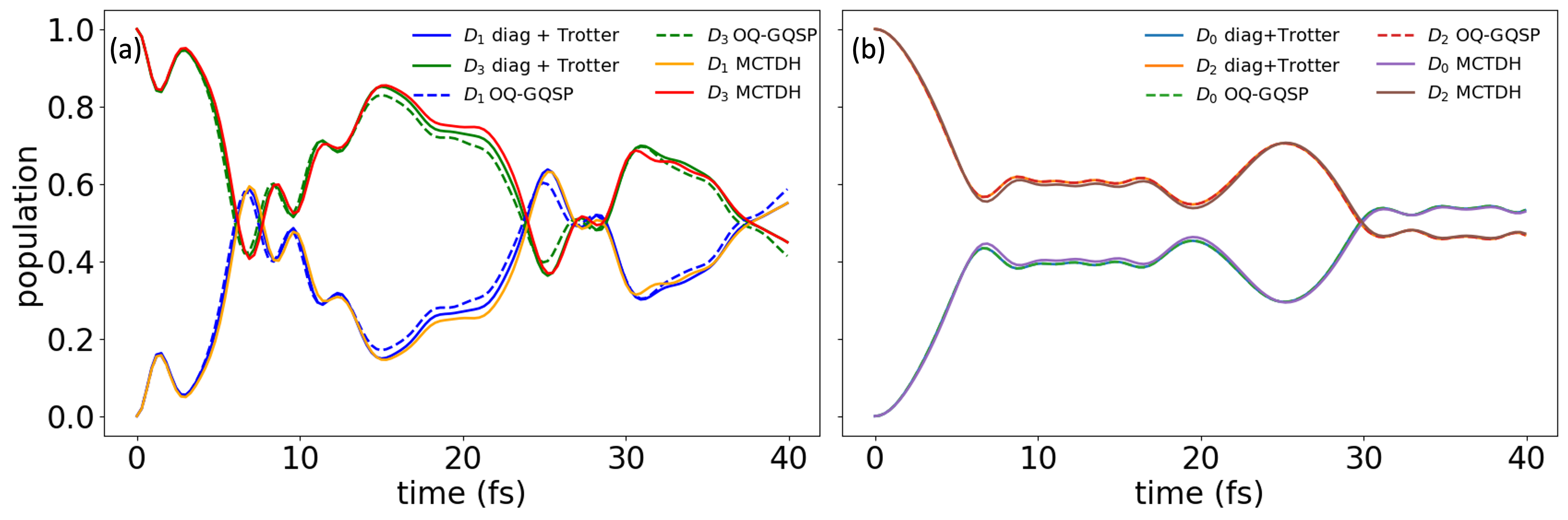}
\caption{\label{fig:OQ-GQSP_dynamics}  Numerical simulation of nonadiabatic dynamics on partial space of uracil cation including the Morse potential. Nonlinear bosonic phase gates are synthesized by OQ-GQSP of nonlinear bosonic phase gate and compared the results of dynamics with replacing OQ-GQSP with numerical diagonalization computing for nonlinear terms. We implemented two conditions of simulation with anharmonic mode $\nu_{26}$: (a) initial state $D_3$ in the subspace $D_1$ and $D_3$ (b) the initial state $D_2$ in the subspace $D_0$ and $D_2$. Dimension of truncated Hilbert space is $30$. Each number of CD gates for OQ-GQSP is 45–100 bounded by the Fourier series approximation error of 0.001. The number of Trotter layers is 100 and the error from the ideal nonlinear bosonic phase gates with matrix diagonalizations is below 0.05.}
\end{figure}

We initialize the system in electronic state $D_3$ and $D_2$ with the anharmonic vibrational mode $\nu_{26}$. The major spectral mode $\nu_{21}$ is displaced by $-4$ in atomic unit, from its ground states, corresponding to vertical excitation from the neutral molecule. Figure~\ref{fig:OQ-GQSP_dynamics} shows the electronic population dynamics over 40 fs, comparing the OQ-GQSP implementation against exact diagonalization in the truncated Hilbert space.

We can observe the population transfer between $\ket{D_0}$ and $\ket{D_2}$ or between $\ket{D_1}$ and $\ket{D_3}$ due to the non-zero off-diaogonal elements in vibronic coupling Hamiltonian in Fig.~\ref{fig:Hamiltonian}. The OQ-GQSP simulation maintains robust fidelity with the exact result, with error within $0.05$ throughout the population dynamics simulation.
\section{Resource estimation}\label{sec:resource estimation}

We analyze the computational resources required for OQ-GQSP and compare them with existing classical and quantum approaches for vibronic dynamics simulation. Our analysis addresses both asymptotic scaling and practical implementation constraints for near-term quantum hardware.

Table~\ref{table:resource estimation} compares the resource requirements across MCTDH for classical method~\cite{Worth2008}, the digital quantum algorithm by Motlagh et al.~\cite{motlagh2024VC}, and our OQ-GQSP method. Each approach targets different regimes and makes trade-offs between memory, runtime, and model expressivity.

\begin{table}
    \centering
    \resizebox{\textwidth}{!}{%
    \begin{tabular}{l|l|l|l}
    Method & Resource & Time complexity & Expressivity \\ 
    \hline
    MCTDH~\cite{Worth2008} & \(n_{\mathrm{SPF}}^p + p n_{\mathrm{SPF}} N_{basis}^{d_{sys}}\)& \(\mathcal{O}(sp^2 {(n_{spf})}^{p+1}+spn_{spf}N_{basis}^{2d_{sys}})\) & \checkmark\\[6pt]
    D. Motlagh, et al.~\cite{motlagh2024VC}& \(\mathcal{O}( (M + d_{\rm{poly}}^2)\log(K) + \log (N))\) qubits & \(\mathcal{O}\left( \left( NM^{d_{\rm{poly}}}(d_{\rm{poly}}\log(K)^2+N) \right)p \right) \)  & \checkmark\\[6pt]
    \textbf{OQ-GQSP} (ours)& $N+1$ qubits, $M$ oscillators & $\mathcal O\!\bigl(\left(
     N M'\ln\varepsilon^{-1}+ N^{2}M\right)p(1-\delta)^{-2M'Np}
   \bigr)$ & $\triangle$\\[6pt]
    \end{tabular}%
    }
    \caption{\label{table:resource estimation}
    Resource scaling comparison for vibronic dynamics simulation methods. Key differences: (1) MCTDH requires exponential memory in system size but no quantum hardware; (2) Motlagh et al. count fault-tolerant Toffoli gates with first order Trotterization for digital quantum computers; (3) OQ-GQSP counts CD gates with first order Trotterization for oscillator-qubit quantum processors. Parameters: $N$ = number of electronic states, $M$ = number of vibrational modes, $n_{\mathrm{SPF}}$ = single-particle functions, $p$ = number of particles, $d_{sys}$ = system coordinates, $N_{basis}$ = primitive basis functions per DOF, $s$ = Hamiltonian terms, $d_{\rm{poly}}$ = polynomial degree, $K$ = grid points, $M'$ = number of anharmonic modes, $\varepsilon$ = target accuracy, $p=\Gamma t^2\varepsilon^{-1}$ number of Trotter layers, $\Gamma$ = commutator bound of Hamiltonian terms, $t$ = simulation time. 
    }
\end{table}

MCTDH exhibits exponential memory scaling with the number of particles and system coordinates, making it infeasible for systems beyond $\sim$10-15 vibrational modes. In contrast, both quantum approaches require only polynomial quantum resources, with OQ-GQSP exploiting native bosonic modes to minimize qubit overhead represented with $\log K$ in Motlagh et al~\cite{motlagh2024VC}. For the cost related to polynomial approximations, $\mathcal{O}(M^{d_{\mathrm{poly}}})$, this factor might become a bottleneck when $M$ or $d_{\mathrm{poly}}$ is large. In practice, $d_{\mathrm{poly}}$ can grow substantially for highly nonlinear potentials or stringent accuracy targets. While OQ-GQSP achieves linear scaling with the number of vibrational modes, the success probability factor $(1-\delta)^{-2M'Np}$ becomes a significant limitation for molecules containing numerous strongly anharmonic modes. For uracil cation consisting of 4 electronic states and among the 12 vibrational modes, there exist quartic and Morse potential modes for $M'=5$. For $M'N = 20$ and target success probabilities $1 - \delta \approx 0.9988$ and $0.9997$, the required number of shots increases by factors of approximately $122$ and $3.14$, respectively, over 100 Trotter layers. The corresponding OQ-GQSP degrees required to synthesize the nonlinear phase gates are $116$ and $261$, respectively. These values are illustrative trade‑off markers under the fully coherent assumption~\cite{laneve2025generalizedquantumsignalprocessing,ni2025inversenonlinearfastfourier}, though they are unlikely to be used directly in near-term simulations. This shows the trade-off between circuit depth and shot overhead: one may reduce the OQ-GQSP degree to shorten circuits at the expense of additional sampling overhead.

\section{Discussion and Conclusions}\label{sec:discussion}

In this work, we developed a hybrid CV-DV quantum compiler that synthesizes arbitrary nonlinear bosonic phase gates with bounded error and applied it to simulate anharmonic vibronic dynamics of the uracil cation. The method preserves vibrational modes in their native bosonic representation while efficiently generating anharmonic potentials via Fourier series approximations. Numerical validation shows that OQ-GQSP captures ultrafast electronic population transfer missed by harmonic approximations. These results indicate that hybrid CV–DV processors can bridge the gap between analog LVC simulators~\cite{MacDonell2021,ha2025} and fully digital vibronic dynamics simulators~\cite{motlagh2024VC}, offering polynomial gate complexity in Fourier bandwidth rather than the degree of polynomial approximation.

While OQ-GQSP represents a step toward practical molecular quantum simulation in oscillator–qubit architectures, several limitations remain. One major challenge is the exponential decay in the probability of success, which scales as $(1 - \delta)^{-2M'Np}$. A fully coherent implementation without the success probability would require ancilla qubits and deeper circuits~\cite{vasconcelos2025BE}. Since oscillator–qubit platforms are unlikely to operate in a fault-tolerant regime, but rather are likely to have advantages in near-term devices~\cite{Kang2024}, we adopted an incoherent protocol with classical overheads instead. Extending the OQ-GQSP to capture mode–mode couplings and reconstruct fully general nonlinear potentials $f_{nm}(\hat{Q}_0, \dots, \hat{Q}_{M-1})$ requires a framework of multivariate QSP (M-QSP). However, efficient angle-finding techniques and a complete characterization of the achievable multivariable polynomials remain open problems, despite recent progress~\cite{Rossi2022multivariable,Mori2024comment,németh2023,ito2024,laneve2025,laneve2025adversaryboundquantumsignal}. From the perspective of quantum machine learning, data re-uploading—a close analogue of M-QSP—is a universal approximator for multivariate functions, though the practicality is an open problem. At least, this theory suggests that general multivariate bosonic operators can, in principle, be numerically synthesized within the M-QSP framework~\cite{quat2025}. On the hardware side, quantum error correction for oscillators remains less developed than for qubits~\cite{Niset2009,Vuillot2019,Noh2020,Wu2021,Wu2023}. More flexible non-stabilizer bosonic codes, tailored to specific tasks, may provide a potential pathway forward. For quantum errors in the near term, recent experiments indicate passive error suppression in QSP‑type sequences~\cite{Bu2025_QSP_trappedions,Zeytinoğlu2024} and an error mitigation technique for hybrid quantum processors is suggested~\cite{park2022}.

We position OQ‑GQSP between fully analog and fully digital approaches to molecular dynamics simulation: a compiler that preserves bosonic modes yet synthesizes anharmonic diagonal potentials with OQ-GQSP on oscillator–qubit ISAs. Recent experiments on trapped-ion and circuit QED platforms~\cite{Whitlow2023,Valahu2023,wang2023_CI_cQED} demonstrated geometric-phase interference around CIs for minimal models, validating the concept of hybrid oscillator-qubit analog simulators~\cite{MacDonell2021}. Theoretical proposals for pre-Born–Oppenheimer (BO) analog simulators~\cite{ha2025} extend this approach beyond the vibronic coupling model for harmonic potential Hamiltonians. Their coupled multi-qubit-boson mapping achieves exponential savings versus classical CASCI-level dynamics but still relies on Trotterized time evolution and continuous analog control of trapped-ion or circuit-QED motional modes, with resource scaling $\bigO (N_o^4M^{d_{\rm{poly}}})$, where $N_o$ is the number of spin-orbitals. One may combine Ref.~\cite{ha2025} and OQ-GQSP methods leading to generalized pre-BO framework to include anharmonic potentials in quantum simulation of coupled electron-nuclear dynamics. The fully digital quantum approach of Motlagh et al.~\cite{motlagh2024VC} achieves quantum simulation for the general vibronic coupling Hamiltonian by digitizing bosonic degrees of freedom. Their singlet fission case study for a 6-state, 21-mode model—requiring 154 qubits and $2.76 \times 10^6$ Toffoli gates for 100 femtoseconds of dynamics. OQ-GQSP offers a constructive middle ground: it preserves bosonic modes in their native Hilbert space while synthesizing anharmonic diagonal potentials through GQSP on oscillator–qubit ISAs. This hybrid strategy directly compiles analytic potentials into hardware-native gates, establishing a general instruction-set architecture in post-NISQ or pre-fault-tolerant trapped-ion and circuit-QED platforms. Future work might extend the framework toward multivariate QSP for coupled anharmonic modes, explore efficient and coherent implementations without post-selections.

\section{Acknowledgments}
This work was partly supported by Basic Science Research Program through the National Research Foundation of Korea (NRF), funded by the Ministry of Science and ICT (RS-2023-NR068116, RS-2023-NR119931, RS-2025-03532992, RS-2025-07882969, RS-2024-00455131). This work was also partly supported by Institute for Information \& communications Technology Promotion (IITP) grant funded by the Korea government (MSIP) (No. 2019-0-00003, Research and Development of Core technologies for Programming, Running, Implementing and Validating of Fault-Tolerant Quantum Computing System). The Ministry of Trade, Industry, and Energy (MOTIE), Korea, also partly supported this research under the Industrial Innovation Infrastructure Development Project (Project No. RS-2024-00466693). JH is supported by the Yonsei University Research Fund of 2025-22-0140.

\section*{Appendix}
\appendix


\section{Values of molecular normal mode parameters}

Hamiltonian of the Uracil cation is built up based on the previous paper~\cite{Assmann2015, Vindel2022}. Notice that typos in the tables are corrected based on the potential energy surface plots. The parameters of potentials and the harmonic frequencies $\omega_r$ are given in $\mathrm{eV}$ and $\mathrm{cm}^{-1}$.

\begin{table}
\centering
\caption{Fitted Parameters: $\omega$ in $\text{cm}^{-1}$, $\kappa_i^{(\alpha)}$, $\lambda^{(\alpha\beta)}$ and $\gamma_{ii}^{(\alpha)}$ in eV for States $D_0$--$D_3$ ($\alpha = 0$--3) and Modes $\nu_3$, $\nu_7$, $\nu_{11}$, $\nu_{18}$--$\nu_{21}$ ($i = 3, 7, 11, 18$--21)~\cite{Assmann2015}. Parameters of the first order diagonal term $\sum_r \kappa^{(n)}_{r}\ket{n} \bra{n} \hat{Q}_r$, the first order off-diagonal term $ \sum_i \lambda^{(nm)}_i ( \ket{m}\bra{n}+ \ket{n}\bra{m}) \hat{Q}_i$ and the quadratic coupling term $\frac{1}{2}\sum_{i,n} \gamma^{(n)}_{i}\ket{n}\bra{n} \hat{Q}_i^2 $ are tabulated.}
\begin{tabular}{c|ccccccccccc}
\hline
Mode&$\omega$ & $\kappa^{(0)}$ & $\kappa^{(1)}$ & $\kappa^{(2)}$ & $\kappa^{(3)}$ & $\lambda^{(02)}$ & $\lambda^{(13)}$ & $\gamma^{(0)}$ & $\gamma^{(1)}$& $\gamma^{(2)}$& $\gamma^{(3)}$ \\
\hline
$\nu_3$  &388 & 0.04139 & -0.02688 & -0.05853 & 0.00132 &  &  & 0.00383 & -0.00426 & -0.00366& -0.00947 \\
$\nu_7$  &560 & -0.05367 & -0.00503 & 0.00775 & -0.04581 & 0.02141 &  &  &  & \\
$\nu_{11}$&770 & 0.05357 & -0.02456 & 0.03697 & -0.02130 & 0.02082 & 0.01845 &  & & &\\
$\nu_{18}$&1193 & -0.02203 & 0.09074 & 0.02748 & -0.04054 & -0.03538 & 0.08077 & 0.01938& 0.00694& -0.00294&0.00752 \\
$\nu_{19}$&1228 & 0.07472 & -0.00582 & 0.00889 & -0.02136 &-0.03763 & 0.05834 & 0.01590 & 0.01348& 0.00901 & 0.00497\\
$\nu_{20}$&1383 & -0.12147 & 0.05316 & 0.11233 & 0.00747 & -0.02049 &  & 0.01489 & 0.00828& 0.00183 & 0.00546\\
$\nu_{21}$&1406 & -0.09468 & 0.04454 & 0.14539 & 0.00050 &  & 0.07284 & 0.00970 & 0.00096& -0.00114 & 0.01108\\
\hline
\end{tabular}
\end{table}

\begin{table}
\centering
\caption{Quartic potentials $\frac{1}{24} k_r^{(a)} \hat{Q}_r^4$~\cite{Assmann2015}.}
\begin{tabular}{c|cccccc}
\hline
Mode & $\omega$&$k^{(0)}$ & $k^{(1)}$ & $k^{(2)}$ & $\lambda^{(01)}$ & $\lambda^{(12)}$ \\
\hline
$\nu_{10}$& 734& 0.03317 & 0.01157 & 0.01534 & 0.04633 & 0.03148 \\
$\nu_{12}$&771 & 0.02979 & 0.01488 & 0.01671 & 0.03540 & 0.03607 \\
\hline
\end{tabular}
\end{table}

\begin{table} 
\centering
\caption{Morse Potentials $V_k^{(\alpha)}(\hat{Q}_k) = d_k^{(\alpha)}[\mathrm{e}^{(a_k^{(\alpha)})(\hat{Q}_k-q_{k,0}^{(\alpha)})}-1]^2+e_{0}^{(\alpha)}$}
\begin{tabular}{c|r r r r r r r}
\hline
\textbf{Mode state} & $\omega$ & \multicolumn{1}{c}{\textbf{\( d_0 \)}} & \multicolumn{1}{c}{\textbf{\( a \)}} & \multicolumn{1}{c}{\textbf{\( q_0 \)}} & \multicolumn{1}{c}{\textbf{\( e_0 \)}} & \multicolumn{1}{c}{\textbf{\( \lambda^{(02)} \)}} & \multicolumn{1}{c}{\textbf{\( \lambda^{(13)} \)}} \\ \hline
\( \nu_{24} \)& 1673     &          &          &          &   &  & -0.01832     \\
\( D_0 \)&  & 41.89704 & -0.04719 & 0.81440 & -0.06431 & & \\ 
\( D_1 \)&  & 38.37122 & -0.05231 & 0.37488 & -0.01505 & & \\ 
\( D_2 \)&  & 39.25691 & -0.05286 & 0.14859 & -0.00244 & & \\ 
\( D_3 \)&  & 37.97847 & -0.05431 & -0.18152 & -0.00366 & & \\ \hline
\( \nu_{25} \) & 1761     &          &          &          &  &  0.00114 & 0.12606  \\
\( D_0 \)& & 4.80270 & 0.13675 & 0.02883 & -0.00007 &  &  \\ 
\( D_1 \)& & 74.15995 & 0.03064 & -1.34468 & -0.12082 & & \\ 
\( D_2 \)& & 90.76928 & 0.03374 & -0.29923 & -0.00916 & & \\ 
\( D_3 \)& & 20.56079 & 0.08044 & 0.38841 & -0.02071 & & \\ \hline
\( \nu_{26} \)& 1794     &          &          &          &  & 0.13035 & 0.14272 \\
\( D_0 \)& & 22.92802 & -0.07438 & -0.32069 & -0.01274 &  &  \\ 
\( D_1 \)& & 18.27440 & -0.07911 & -0.01711 & -0.00003 & & \\ 
\( D_2 \)& & 9.46894 & -0.08653 & 0.37635 & -0.01037 & & \\ 
\( D_3 \)& & 65.09678 & -0.03660 & 1.66312 & -0.25639 & & \\ 
\end{tabular}
\end{table}

\section{Failure of LVC/QVC approximations}

\begin{figure}
\centering
\includegraphics[width=0.95\textwidth]{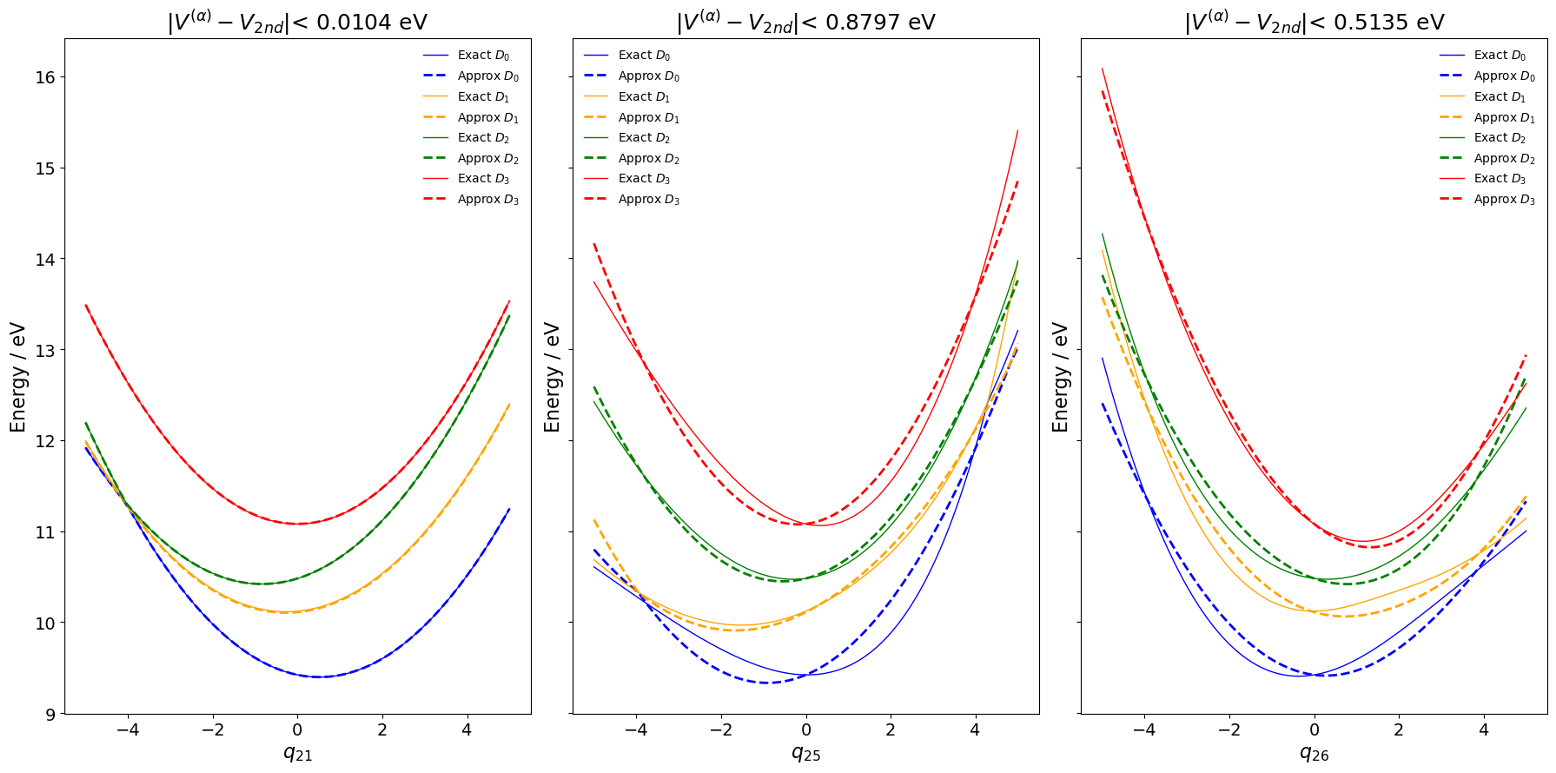}
\caption{Potential-energy cuts along 3 energy-dominant modes $\nu_{21}$, $\nu_{25}$ and $\nu_{26}$
coordinate. Even QVC can misplace the CIs by $>0.3$ eV.}
\label{fig:PES}
\end{figure}

The necessity of anharmonic terms becomes apparent when comparing harmonic models against the anharmonic model. Figure~\ref{fig:PES} reveals how LVC and quadratic vibronic coupling (QVC) approximations severely distort the potential energy surfaces along critical vibrational coordinates. These models misplace CIs along the $\nu_{25}$ and $\nu_{26}$ carbonyl stretch. These errors propagate to the dynamics. Figure~\ref{fig:MCTDH_dynamics} compares MCTDH population dynamics for initial excitation into the $D_2$ state. The anharmonic model predicts complete depopulation within 50 femtoseconds—consistent with experimental observations—while the LVC approximation shows virtually no decay. The QVC model performs marginally better but still fails to capture the rapid relaxation through the conical intersection. These failures stem from the harmonic models' inability to reproduce coordinate-dependent frequency shifts and mode mixing near the CI seam, validating the essential role of diagonal anharmonicity.

We compare the exact model with the QVC model, not with the LVC model, because the phase-space rotation gates $R_{nr}(\theta) = \exp[-\mathrm{i}\theta \hat{\sigma}_z \hat{a}_r^\dagger\hat{a}_r]$ enable native implementation upto quadratic potentials. Trapped-ion systems realize these through state-dependent squeezing, where driving at twice the motional frequency generates spin-dependent coordinate transformations in phase space~\cite{katz2023}. In circuit QED, the dispersive interactions between the bosonic mode and the auxiliary qubit realize a spin-dependent phase space rotation gate~\cite{Eickbusch2022}.

\begin{figure}
\centering
\includegraphics[width=0.95\textwidth]{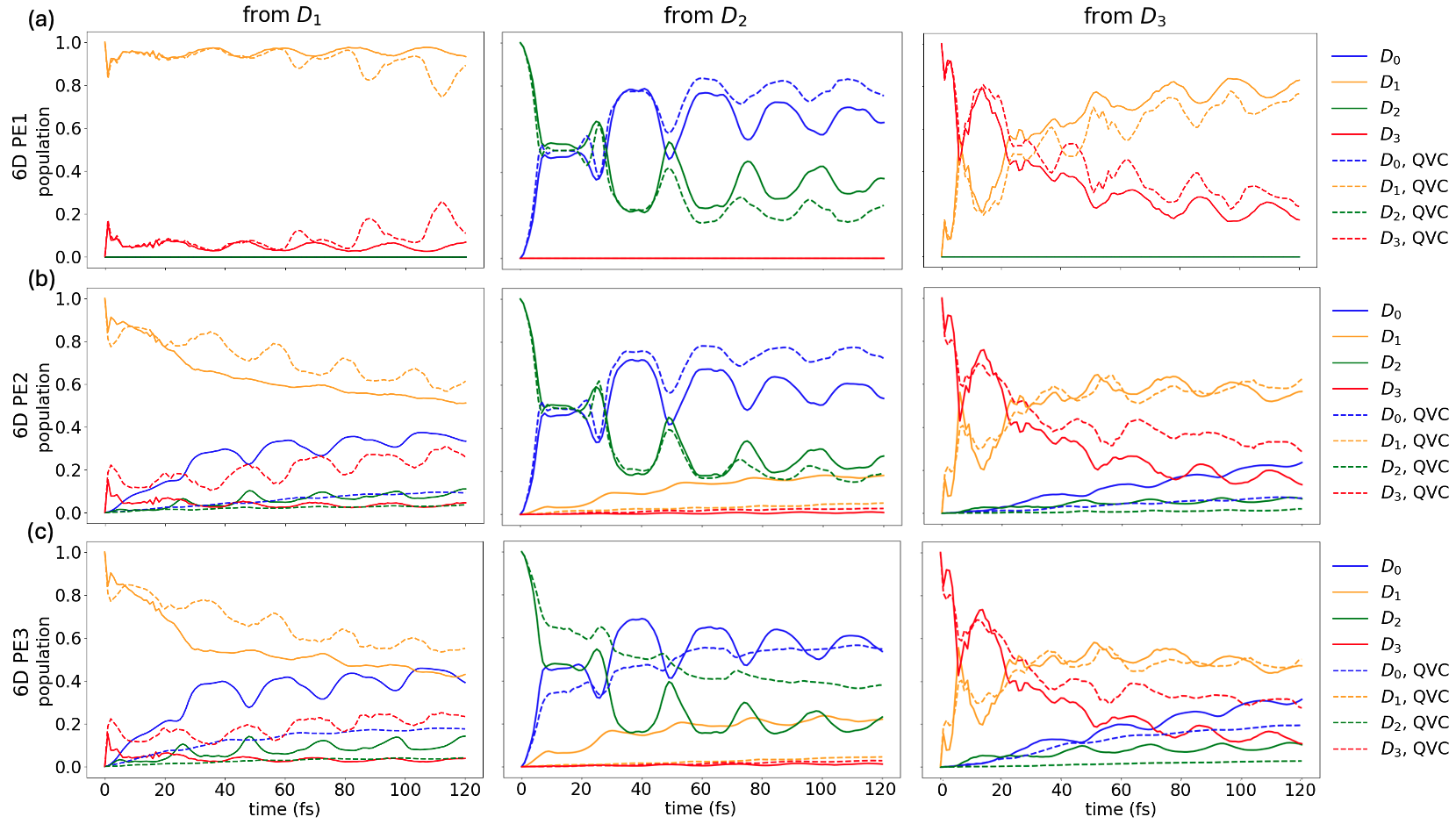}
\caption{MCTDH population dynamics for different mode set 6D PE1-3 and initial states $D_1\text{-}D_3$. Only the anharmonic model reproduces the ultrafast $\leq 50$ fs oscillating relaxation.}
\label{fig:MCTDH_dynamics}
\end{figure}

\section{Complexity analysis}

\begin{restatelemma}{lem:GQSP_qubit_oscillator_main}
Given a Z-basis controlled position displacement gate, one can construct complementary generalized signal operators
\begin{align}
\begin{aligned}
 A_{\hat{Q}} &= (\lvert 0 \rangle_q \langle 0 \rvert \otimes \hat{U} ) + (\lvert 1 \rangle_q \langle 1 \rvert \otimes \mathbb{I}_{\mathrm{osc}})\\ &= 
        \begin{bmatrix}
        \hat{U} & 0 \\
        0 & \mathbb{I}_{\mathrm{osc}}\\
        \end{bmatrix}, \\
 B_{\hat{Q}} &= (\lvert 0 \rangle_q \langle 0 \rvert \otimes \mathbb{I}_{\mathrm{osc}} ) + (\lvert 1 \rangle_q \langle 1 \rvert \otimes \hat{U}^{\dagger} )\\ &= 
        \begin{bmatrix}
        \mathbb{I}_{\mathrm{osc}}& 0 \\
        0 & \hat{U}^{\dagger}\\
        \end{bmatrix},\\ 
\end{aligned}
\end{align}
with one reusable ancilla qubit and 2 Z-basis CD gates, where $\hat{U} = \mathrm{e}^{\mathrm{i}\pi/L\hat{Q}}$. With these two signal operators, one can construct an arbitrary Fourier series $F(\hat{U})$ of the truncation order $d$ with the period $2L$ of the bosonic phase operator $\mathrm{e}^{\mathrm{i}\pi/L\hat{Q}}$ by Theorem~\ref{thm:GQSP}:
\begin{align}
\begin{aligned}
\operatorname{OQ-GQSP}:=&\mathrm{e}^{i\lambda\hat{\sigma}_z}\mathrm{e}^{i\phi_{-d}\hat{\sigma}_x} \mathrm{e}^{i\theta_{-d}\hat{\sigma}_z}\left( \prod_{r=-d+1}^{0} \hat{B}_{\hat{Q}} \mathrm{e}^{i\phi_{r}\hat{\sigma}_x} \mathrm{e}^{i\theta_{r}\hat{\sigma}_z}\right)\left( \prod_{r=1}^{d} \hat{A}_{\hat{Q}} \mathrm{e}^{i\phi_{r}\hat{\sigma}_x} \mathrm{e}^{i\theta_{r}\hat{\sigma}_z} \right) \\
=&\begin{pmatrix}
  F(\hat{U}) &-(G(\hat{U}))^{\dagger} \\
  G(\hat{U}) & (F(\hat{U}))^{\dagger} 
\end{pmatrix},
\end{aligned}
\end{align}
where $F(\hat{U}) = \sum\limits_{n=-d}^{d} c_n \mathrm{e}^{\mathrm{i}n\pi/L\hat{Q}}$ and $c_n$ is the Fourier series coefficient.
\end{restatelemma}
\begin{proof}
The proof is to make a signal operator $A_{\hat{Q}}  = 
\begin{bmatrix}
\mathrm{e}^{\mathrm{i}\frac{\pi}{L}\hat{Q}} & 0 \\
0 & I\\
\end{bmatrix}$ for GQSP in oscillator-qubit system from two CD gates: a position-displacement gate controlled by an ancilla qubit $\operatorname{CD}_{a}  = 
\begin{bmatrix}
\mathrm{e}^{\mathrm{i}\frac{\pi}{2L}\hat{Q}} & 0 \\
0 & \mathrm{e}^{-\mathrm{i}\frac{\pi}{2L}\hat{Q}}\\
\end{bmatrix}= \ket{0}_a\bra{0}\otimes \mathrm{e}^{\mathrm{i}\frac{\pi}{2L}\hat{Q}} + \ket{1}_a\bra{1}\otimes \mathrm{e}^{-\mathrm{i}\frac{\pi}{2L}\hat{Q}}$ and a position-displacement gate controlled by a main qubit $\operatorname{CD}_{q}  = 
\begin{bmatrix}
\mathrm{e}^{\mathrm{i}\frac{\pi}{2L}\hat{Q}} & 0 \\
0 & \mathrm{e}^{-\mathrm{i}\frac{\pi}{2L}\hat{Q}}\\
\end{bmatrix}= \ket{0}_q\bra{0}\otimes \mathrm{e}^{\mathrm{i}\frac{\pi}{2L}\hat{Q}} + \ket{1}_q\bra{1}\otimes \mathrm{e}^{-\mathrm{i}\frac{\pi}{2L}\hat{Q}}$. They are conventional the signal operators of the Z-basis hybrid oscillator-qubit QSP~\cite{liu2024toward}. 

Let $U = \mathrm{e}^{\mathrm{i}\frac{\pi}{2L} \hat{Q}}$ and $V = \mathrm{e}^{\mathrm{i} \frac{\pi}{L} \hat{Q}} = U^2$. We show how to implement the operator $A_{\hat{Q}} = \ket{0}_m\bra{0}\otimes V + \ket{1}_m\bra{1}\otimes \mathbb{I}_{osc}$ on the main qubit ($m$) and oscillator system, using one ancilla qubit ($a$) prepared in state $\ket{0}_a$ concatenating two CD gates.
Consider the sequence $\operatorname{CD}_m \operatorname{CD}_a$ acting on an initial state $(\alpha_m\ket{0}_m+\beta_m\ket{1}_m)\ket{0}_a\ket{\mathrm{osc}}$:
\begin{align}
  \begin{aligned}
&(\alpha_m\ket{0}_m+\beta_m\ket{1}_m)\ket{0}_a\ket{\mathrm{osc}} \\
\xrightarrow{\operatorname{CD}_a} & (\alpha_m\ket{0}_m+\beta_m\ket{1}_m) \left[ (\ket{0}_a\bra{0}\otimes U + \ket{1}_a\bra{1}\otimes U^{-1}) \ket{0}_a\ket{\mathrm{osc}} \right] \\
&= (\alpha_m\ket{0}_m+\beta_m\ket{1}_m) \ket{0}_a (U\ket{\mathrm{osc}})\\
\xrightarrow{\operatorname{CD}_m} & (\ket{0}_m\bra{0}\otimes U + \ket{1}_m\bra{1}\otimes U^{-1}) \left[ \alpha_m\ket{0}_m\ket{0}_a (U\ket{\mathrm{osc}}) + \beta_m\ket{1}_m\ket{0}_a (U\ket{\mathrm{osc}}) \right] \\
&= \alpha_m\ket{0}_m\ket{0}_a (U \cdot U\ket{\mathrm{osc}}) + \beta_m\ket{1}_m\ket{0}_a (U^{-1} \cdot U\ket{\mathrm{osc}}) \\
&= \alpha_m\ket{0}_m\ket{0}_a (U^2\ket{\mathrm{osc}}) + \beta_m\ket{1}_m\ket{0}_a (\mathbb{I}_{osc}\ket{\mathrm{osc}}) \\
&= \left[ (\alpha_m\ket{0}_m \otimes V + \beta_m\ket{1}_m \otimes \mathbb{I}_{osc}) \ket{\mathrm{osc}} \right] \otimes \ket{0}_a.
\end{aligned}
\end{align}
This sequence implements the operator $A_{\hat{Q}} = \text{diag}(V, \mathbb{I}_{osc}) = \text{diag}(U^2, \mathbb{I}_{osc})$ on the main oscillator-qubit subspace, while preserving the ancilla state $\ket{0}_a$, allowing its reuse.
\end{proof}

\begin{restatelemma}{lem:GQSP_qubit_oscillator_state_dependent}
Given a state in oscillator-qubit processor $\ket{\psi_D}\ket{\mathrm{osc}}\ket{0}_a$ with four inverted-unary‑encoded electronic states $\ket{\psi_{D}} = a_0\ket{D_0}+a_1\ket{D_1}+a_2\ket{D_2}+a_3\ket{D_3}$, an oscillator $\ket{\mathrm{osc}}$, and an ancilla qubit $\ket{0}_a$, we can implement a state-dependent nonlinear bosonic phase gate
$\mathrm{e}^{i\ket{D_n}\bra{D_n}V_{nr}(\hat{Q}_{r})}$
with two OQ-GQSP operators in Lemma~\ref{lem:GQSP_qubit_oscillator_main} and 2 measurements with which each success probability is $1-\delta$ where $\delta = \lvert | G(\hat{U})  \rvert |^2 \ll 1.$
\end{restatelemma}
\begin{proof}
Let \begin{align}
\begin{aligned}
\operatorname{OQ-GQSP}_a=&\begin{pmatrix}
F(\hat{U}) &-(G(\hat{U}))^{\dagger} \\
G(\hat{U}) & (F(\hat{U}))^{\dagger} 
\end{pmatrix}_a\\  =& \ket{0}_a\bra{0}F(\hat{U}) - \ket{0}_a\bra{1}(G(\hat{U}))^{\dagger} +  \ket{1}_a\bra{0}G(\hat{U}) + \ket{1}_a\bra{1}(F(\hat{U}))^{\dagger} 
\end{aligned}
\end{align}
and
\begin{align}
\begin{aligned}
\operatorname{OQ-GQSP}_m=&\begin{pmatrix}
F(\hat{U}) &-(G(\hat{U}))^{\dagger} \\
G(\hat{U}) & (F(\hat{U}))^{\dagger} 
\end{pmatrix}_m\\ =& \ket{0}_m\bra{0}F(\hat{U}) -  \ket{0}_m\bra{1}(G(\hat{U}))^{\dagger} +  \ket{1}_m\bra{0}G(\hat{U}) + \ket{1}_m\bra{1}(F(\hat{U}))^{\dagger}.
\end{aligned}
\end{align}
\begin{align}
\begin{aligned}
\ket{\psi_D}\ket{\mathrm{osc}}\ket{0}_a  \xrightarrow{\operatorname{OQ-GQSP}_a}  & \ket{\psi_D}\left( F(\hat{U})\ket{\mathrm{osc}}\ket{0}_a + G(\hat{U})\ket{\mathrm{osc}}\ket{1}_a \right)\\
\xrightarrow{\operatorname{Measure }{\ket{0}_a\bra{0}}} &\ket{\psi_D}F(\hat{U})\ket{\mathrm{osc}}\ket{0}_a  \text{ with probability $\boldsymbol{1-\delta}$} \\
\xrightarrow{\operatorname{OQ-GQSP}_m} &a_0\left( \ket{1110}F(\hat{U})^2\ket{\mathrm{osc}}\ket{0}_a  +   \ket{1111}G(\hat{U})F(\hat{U})\ket{\mathrm{osc}}\ket{0}_a \right)\\
&a_1 \left( \ket{1100}(-G(\hat{U})^{\dagger})F(\hat{U})\ket{\mathrm{osc}}\ket{0}_a  +   \ket{1101} F(\hat{U})^{\dagger}F(\hat{U})\ket{\mathrm{osc}}\ket{0}_a \right)\\
&a_2 \left( \ket{1010}(-G(\hat{U})^{\dagger})F(\hat{U})\ket{\mathrm{osc}}\ket{0}_a  +   \ket{1011} F(\hat{U})^{\dagger}F(\hat{U})\ket{\mathrm{osc}}\ket{0}_a \right)\\
&a_3 \left( \ket{0110}(-G(\hat{U})^{\dagger})F(\hat{U})\ket{\mathrm{osc}}\ket{0}_a  +   \ket{0111} F(\hat{U})^{\dagger}F(\hat{U})\ket{\mathrm{osc}}\ket{0}_a \right)\\
\xrightarrow{\operatorname{Parity\, measure}} &a_0\left( \ket{1110}F(\hat{U})^2\ket{\mathrm{osc}}\ket{1}_a  +   \ket{1111}G(\hat{U})F(\hat{U})\ket{\mathrm{osc}}\ket{0}_a \right)\\
&a_1 \left( \ket{1100}(-G(\hat{U})^{\dagger})F(\hat{U})\ket{\mathrm{osc}}\ket{0}_a  +   \ket{1101} F(\hat{U})^{\dagger}F(\hat{U})\ket{\mathrm{osc}}\ket{1}_a \right)\\
&a_2 \left( \ket{1010}(-G(\hat{U})^{\dagger})F(\hat{U})\ket{\mathrm{osc}}\ket{0}_a  +   \ket{1011} F(\hat{U})^{\dagger}F(\hat{U})\ket{\mathrm{osc}}\ket{1}_a \right)\\
&a_3 \left( \ket{0110}(-G(\hat{U})^{\dagger})F(\hat{U})\ket{\mathrm{osc}}\ket{0}_a  +   \ket{0111} F(\hat{U})^{\dagger}F(\hat{U})\ket{\mathrm{osc}}\ket{1}_a \right)\\
\xrightarrow{\operatorname{Measure }{\ket{1}_a\bra{1}}} & a_0 \ket{1110}F(\hat{U})^2\ket{\mathrm{osc}}\ket{1}_a + a_1 \ket{1101}F(\hat{U})^{\dagger}F(\hat{U})\ket{\mathrm{osc}}\ket{1}_a \\
+&a_2\ket{1011}F(\hat{U})^{\dagger}F(\hat{U})\ket{\mathrm{osc}}\ket{1}_a + a_3 \ket{0111}F(\hat{U})^{\dagger}F(\hat{U})\ket{\mathrm{osc}}\ket{1}_a\\ &\text{ with probability $\boldsymbol{1-\delta}$}\\
\approx& a_0 \ket{1110}F(\hat{U})^2\ket{\mathrm{osc}}\ket{1}_a + a_1 \ket{1101}\ket{\mathrm{osc}}\ket{1}_a \\
+&a_2\ket{1011}\ket{\mathrm{osc}}\ket{1}_a + a_3 \ket{0111}\ket{\mathrm{osc}}\ket{1}_a\\
\xrightarrow{\operatorname{Reset \, ancilla}{\ket{0}_a}}& a_0 \ket{1110}F(\hat{U})^2\ket{\mathrm{osc}}\ket{0}_a + a_1 \ket{1101}\ket{\mathrm{osc}}\ket{0}_a \\
+&a_2\ket{1011}\ket{\mathrm{osc}}\ket{0}_a + a_3 \ket{0111}\ket{\mathrm{osc}}\ket{0}_a ,
\end{aligned}
\end{align}
where the first measurement on the ancilla qubit yields implementation of $F(\hat{U})$ operator on the oscillator state and the second measurement of the parity on the main qubit yields implementation of $F(\hat{U})^2$ on the oscillator state coupled with the target electronic state. This implements the desired state-dependent phase gate with heralded success of $(1-\delta)^2$ without collapsing the electronic state data of the main qubits.
\end{proof}

OQ-GQSP approximates nonlinear bosonic phase gates via truncated Fourier series. Before analyzing the complexity of simulating the vibronic coupling Hamiltonian with this method, we first derive the function approximation error, based on the fact that the exponential decay of Fourier coefficients is equivalent to the analyticity of the target function.

For this analysis, we assume the simulation is restricted to a compact real interval $x \in [-L, L]$ with $L \in \mathbb{R}$. Hence, reconstructed potential energy surfaces of uracil cation are expanded in the orthonormal Fourier basis $\{e^{\,\mathrm{i}\pi k x / L}\}_{k\in\mathbb Z}$. The relevant Fourier series takes the form $\sum_k c_k \mathrm{e}^{\mathrm{i}\pi k x / L}$, which can be viewed as a special case of a Laurent series $\sum_k c_k w^k$ under the change of variables $w = \mathrm{e}^{\mathrm{i}\pi x/L}$, where $|w| = 1$. This identification allows us to apply complex-analytic estimates.

\begin{lemma}[Fourier truncation for phase functions]\label{lem:trunc-L}
Let $L>0$, $\sigma>0$, and $\varepsilon\in(0,1)$.  
Suppose $f:[-L,L]\to\mathbb R$ is real‑analytic and set $g(x):=e^{\mathrm i \Delta t f(x)}$
extended $2L$‑periodically to $\mathbb R$.
Assume $g$ admits a holomorphic extension to the strip
$\mathcal S_\sigma:=\{z\in\mathbb C:\ |\Im z|<\sigma\}$ with
$\displaystyle \sup_{z\in\mathcal S_\sigma}\lvert g(z)\rvert \le B$ for some $B\ge 1$.

\smallskip
Write its Fourier series
\begin{align}
\begin{aligned}
  g(x)=\sum_{k\in\mathbb Z} a_k\,e^{\mathrm i\frac{\pi k}{L}x},
  \qquad
  a_k=\frac1{2L}\int_{-L}^{L} g(x)\,e^{-\mathrm i\frac{\pi k}{L}x}\,dx .
\end{aligned}
\end{align}

\noindent
Then, with \(\rho:=e^{\pi\sigma/L}>1\),
\begin{align}
\begin{aligned}
  \;|a_k|\le B\,\rho^{-|k|}\quad(k\in\mathbb Z) .
\end{aligned}
\end{align}

\noindent
Hence the $d$‑term truncation  
\(g_{(d)}(x):=\sum\limits_{|k|\le d}a_k e^{\mathrm i\frac{\pi k}{L}x}\) obeys
\begin{align}
\begin{aligned}
  \|g-g_{(d)}\|_{[-L,L]}
  \le \frac{2B\,\rho^{-(d+1)}}{1-\rho^{-1}}
  \le \varepsilon
\end{aligned}
\end{align}
provided
\begin{align}
\begin{aligned}
  d\;\ge\;
  \left\lceil
     \frac{L}{\pi\sigma}\,
     \ln\!\Bigl(\tfrac{2B}{(1-\rho^{-1})\varepsilon}\Bigr)
  \right\rceil
  =\;
  \mathcal O\!\bigl(\ln\varepsilon^{-1}\bigr)\text{ for fixed }\sigma.
\end{aligned}
\end{align}

\end{lemma}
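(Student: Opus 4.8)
The plan is to establish the coefficient bound $|a_k|\le B\rho^{-|k|}$ by a standard Paley--Wiener--type contour-shift argument, and then to sum the resulting geometric tail to control the truncation error. The exponential decay of the Fourier coefficients is the analytic manifestation of the holomorphic extendability of $g$ to the strip $\mathcal S_\sigma$, so the whole lemma reduces to exploiting that extension quantitatively.

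First I would fix $k>0$ and $0<\sigma'<\sigma$ and consider the rectangular contour with vertices $-L$, $L$, $L-\mathrm i\sigma'$, $-L-\mathrm i\sigma'$, integrating the holomorphic integrand $h(z):=g(z)\,e^{-\mathrm i\pi k z/L}$ around it. Since $g$ is holomorphic on $\mathcal S_\sigma$, Cauchy's theorem forces the closed-contour integral to vanish. The decisive point is that the two vertical segments cancel: $2L$-periodicity gives $g(L-\mathrm it)=g(-L-\mathrm it)$, while the phase factors at $x=\pm L$ differ only by $e^{\mp \mathrm i\pi k}=(-1)^k$ and hence coincide, so $h(L-\mathrm it)=h(-L-\mathrm it)$ and the oppositely oriented vertical contributions annihilate. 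This shifts the defining integral for $a_k$ from the real axis down to the line $\Im z=-\sigma'$, where the phase factor supplies $e^{-\pi k\sigma'/L}$ and $|g|\le B$; bounding the length-$2L$ integral then yields $|a_k|\le B\,e^{-\pi k\sigma'/L}$. Letting $\sigma'\uparrow\sigma$ gives $|a_k|\le B\rho^{-k}$, the symmetric argument (shifting upward) handles $k<0$, and $k=0$ is the trivial bound $|a_0|\le B$.

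With the coefficient bound in hand, the truncation error follows by the triangle inequality and a geometric sum: $\|g-g_{(d)}\|_{[-L,L]}\le\sum_{|k|>d}|a_k|\le 2B\sum_{k>d}\rho^{-k}=2B\rho^{-(d+1)}/(1-\rho^{-1})$. Imposing that this not exceed $\varepsilon$ and taking logarithms, using $\ln\rho=\pi\sigma/L$, produces the stated sufficient condition on $d$ and the $\mathcal O(\ln\varepsilon^{-1})$ scaling for fixed $\sigma$; the ceiling in the statement only strengthens the bound, so no sharper accounting is required.

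I expect the main obstacle to be the rigorous justification of the contour shift rather than any subsequent estimate. Specifically, one must verify that $h$ is genuinely holomorphic up to (not merely strictly inside) the line $\Im z=-\sigma'$ for each $\sigma'<\sigma$, and that the vertical-segment cancellation is exact---this is precisely where periodicity and the integer-$k$ phase matching $(-1)^k$ are indispensable. Handling the \emph{open} strip then requires the limit $\sigma'\uparrow\sigma$, so one should either strengthen the hypothesis to $|g|\le B$ up to the boundary or accept the bound with $\sigma$ replaced by an arbitrary $\sigma'<\sigma$ and pass to the supremum; this boundary behavior of the holomorphic extension is the only point demanding mild additional care.
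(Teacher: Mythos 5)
Your proposal is correct and takes essentially the same route as the paper: the paper's proof performs exactly your contour shift after the change of variables $w=\mathrm{e}^{\mathrm{i}\pi z/L}$, which maps the strip to the annulus $\{\rho^{-1}<|w|<\rho\}$ and turns your rectangle argument into a deformation of the circle $|w|=1$ to $|w|=\eta$ (with the limit $\eta\to\rho^{\pm1}$ playing the role of your $\sigma'\uparrow\sigma$), followed by the identical geometric tail sum and the same logarithmic bound on $d$. The only noteworthy difference is presentational: your rectangle version makes the role of $2L$-periodicity explicit (exact cancellation of the vertical sides via the integer-$k$ phase matching), whereas the paper's annulus construction uses periodicity implicitly to make $\tilde g(w)=g\bigl(\tfrac{L}{\mathrm{i}\pi}\log w\bigr)$ single-valued across the branch cut of the principal logarithm.
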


\begin{proof}
Put $w:=e^{\mathrm i\pi z/L}$; this maps $\mathcal S_\sigma$ conformally
onto the annulus $\mathcal A_\rho:=\{\rho^{-1}<|w|<\rho\}$ with
$\rho=e^{\pi\sigma/L}$.  
Set $\tilde g(w):=g\!\bigl(\tfrac{L}{\mathrm i\pi}\log w\bigr)$,
using the principal branch of $\log$.  
Then $\tilde g$ is analytic on $\mathcal A_\rho$ and
$|\tilde g(w)|\le B$ there.

By change of variables $w=e^{\mathrm i\pi x/L}$ we have
$dx=\tfrac{L}{\mathrm i\pi}\,w^{-1}dw$ and
\begin{align}
\begin{aligned}
  a_k
  =\frac{1}{2\pi\mathrm i}\,
    \oint_{|w|=1}\tilde g(w)\,w^{-k-1}\,dw .
\end{aligned}
\end{align}

Deforming the contour to $|w|=\eta$ with $\rho^{-1}<\eta<\rho$ yields the Cauchy estimate
\begin{align}
\begin{aligned}
  |a_k|
  \le B\,\eta^{-|k|}\qquad(k\in\mathbb Z).
\end{aligned}
\end{align}
Taking the limit $\eta\to \rho$ for $k\ge 0$ and $\eta\to \rho^{-1}$ for $k\le 0$ then implies 
\begin{align}
\begin{aligned}
|a_k|\le B\,\rho^{-|k|}.
\end{aligned}
\end{align}

Summing the geometric tails,
\begin{align}
\begin{aligned}
  \sum_{|k|>d}|a_k|
  \le
  2B\sum_{k=d+1}^\infty \rho^{-k}
  =\frac{2B\,\rho^{-(d+1)}}{1-\rho^{-1}},
\end{aligned}
\end{align}
which establishes the stated bound on $d$.
\end{proof}

For physically relevant potentials that are entire, e.g., polynomials, Morse, trigonometric functions, we may choose $\sigma=L$, which gives $\rho=e^{\pi}\approx 23.14$. In this regime, the truncation degree simplifies to $d=\mathcal O(\ln(1/\varepsilon))$ up to absolute constants. If $1<\rho<2$, retain the prefactor $(1-\rho^{-1})^{-1}$ in the tail bound; this does not arise under $\sigma=L$.

\begin{restatetheorem}{thm:GQSP_VCdynamics}
Let $N$ be the number of electronic states, $M$ the number of vibrational modes, and $M'$ the number of anharmonic modes. For
\begin{align}
  \hat{H}_{\mathrm{VC}}
  =\sum_{n=0}^{N-1}\hat H_{\mathrm{diag}}^{(n)}
    +\!\!\sum_{\substack{n,m=0 \\ n \ne m}}^{N-1} \hat H_{\mathrm{off}}^{(n,m)} ,
\end{align}
set
\begin{align}
  \hat H_{\mathrm{diag}}^{(n)}
  =\sum_{r=0}^{M-1}\ket n\!\bra n\otimes f_{nr}(\hat Q_r),\qquad
  \hat H_{\mathrm{off}}^{(n,m)}
  =(\ket n\!\bra m)
    \otimes\sum_{r=0}^{M-1}\lambda_{nmr}\hat Q_r .
\end{align}

Fix $L>0$. For each pair $(n,r)$ set $g_{nr}(x):=\exp\!\bigl(\mathrm i\,\Delta t f_{nr}(x)\bigr)$. Assume the analyticity and boundedness hypotheses of Lemma~\ref{lem:trunc-L} for each $g_{nr}(x)$.

Let $\varepsilon\in(0,1)$ denote the total target error and split it equally between product–formula and Fourier–truncation contributions:
\begin{align}
  \varepsilon=\varepsilon_{\mathrm{Trot}}+\varepsilon_{\mathrm{Fourier}},
  \qquad
  \varepsilon_{\mathrm{Trot}}=\varepsilon_{\mathrm{Fourier}}=\varepsilon/2.
\end{align}

Let $\Delta t\in(0,1]$ and $\varepsilon_{\mathrm{Trot}}\in(0,1)$ and choose $\Delta t$ so that
\begin{align}
  \frac{\Delta t^{2}}{2}\,
  \Bigl\|\Bigl[\sum_{n}\hat H_{\mathrm{diag}}^{(n)}\!,\ \sum_{n \ne m}\hat H_{\mathrm{off}}^{(n,m)}\Bigr]\Bigr\|
  \le \varepsilon_{\mathrm{Trot}} .
\end{align}
Then OQ–GQSP yields a unitary
\begin{align}
  \tilde U
  =\prod_{n=0}^{N-1}\mathrm{e}^{-i\hat H_{\mathrm{diag}}^{(n)}\Delta t}
    \prod_{\substack{n,m=0 \\ n \ne m}}^{N-1}\mathrm{e}^{-i\hat H_{\mathrm{off}}^{(n,m)}\Delta t}
\end{align}
such that $\bigl\|\mathrm{e}^{-i\hat H_{\mathrm{VC}}\Delta_t}-\tilde U\bigr\|
\le \varepsilon$,
using $\mathcal{O}\!\bigl(N M'\ln\varepsilon^{-1}+ N^2 M\bigr)$
queries to $\mathrm{CD}$ gates and achieving success probability
$\,(1-\delta)^{2M'N}$.
\end{restatetheorem}

\begin{proof}
First‑order splitting gives
\begin{align}
  \bigl\|e^{-i\hat H_{\mathrm{VC}}\Delta t}
  -e^{-i\sum_{n}\hat H_{\mathrm{diag}}^{(n)}\Delta t}\,
   e^{-i\sum_{n<m}\hat H_{\mathrm{off}}^{(n,m)}\Delta t}\bigr\|
  \le \frac{\Delta t^{2}}{2}\,
  \Bigl\|\Bigl[\sum_{n}\hat H_{\mathrm{diag}}^{(n)}\!,\ \sum_{n<m}\hat H_{\mathrm{off}}^{(n,m)}\Bigr]\Bigr\| ,
\end{align}
so the chosen $\Delta t$ bounds the BCH error by $\varepsilon_{\mathrm{Trot}}$.
Lemma~\ref{lem:GQSP_qubit_oscillator_main} converts
$g_{nr}(x)=\exp\!\bigl(i\Delta t f_{nr}(x)\bigr)$ into a length‑$d$
Fourier series
\begin{align}
g_{nr}(x)=\sum_{k=-d}^{d}c^{(nr)}_{k}\,e^{\,i\frac{\pi k}{L}x}
\;+\;R_{nr}^{(d)}(x),\qquad
\|R_{nr}^{(d)}\|_{\infty}\le\varepsilon_{\mathrm{Fourier}},
\end{align}
and Lemma~\ref{lem:GQSP_qubit_oscillator_state_dependent} implements
$\mathrm{e}^{-i\hat H_{\mathrm{diag}}^{(n,r)}\Delta t}
=\ket n\!\bra n\otimes g_{nr}(\hat Q_r)$
with $\mathcal{O}(d)$ calls to $\mathrm{CD}$.
The diagonal sector therefore costs
$\mathcal{O}\!\bigl(NM'd+N(M-M')\bigr)$ queries, since linear vibronic
diagonal modes use $\mathcal{O}(M-M')$ queries.
Choosing $d=\mathcal{O}\!\bigl(\ln\varepsilon_{\mathrm{Fourier}}^{-1}\bigr)$
by Lemma~\ref{lem:trunc-L} for $NM$ diagonal terms gives
$\mathcal{O}\!\bigl(NM'\ln\varepsilon_{\mathrm{Fourier}}^{-1}+N(M-M')\bigr)$
queries in that sector. The $\ln\varepsilon^{-1}$ factor is explicitly $\ln\varepsilon_{\mathrm{Fourier}}^{-1}$. The off‑diagonal terms are already linear, so no
Fourier expansion is needed; their cost is $\frac{N(N-1)}{2}M\times 2=\mathcal{O}(N^{2}M)$ queries. Adding both contributions yields the stated gate count. The extra $\mathcal{O}\!\bigl(N(M-M')\bigr)$ term is asymptotically dominated by $\mathcal{O}(N^{2}M)$.
\end{proof}


\clearpage
\bibliography{mybibfile.bib}

\end{document}